\newcommand{\removelatexerror}{\let\@latex@error\@gobble}
\newtheorem{thm}{Theorem}
\DeclareMathOperator*{\argmax}{argmax} 
\tikzstyle{block} = [draw, rectangle, minimum size=3em]
\tikzstyle{bigblock} = [draw, rectangle, minimum height=7em, minimum width=11em]
\theoremstyle{remark}
\tikzset{>=latex}
\newenvironment{definition}[1][Definition]{\begin{trivlist}
		\item[\hskip \labelsep {\bfseries #1}]}{\end{trivlist}}
\theoremstyle{exampstyle}
\title{Cost-Bounded Active Classification Using\\ Partially Observable Markov Decision Processes}
\author{Bo~Wu, Mohamadreza Ahmadi, Suda Bharadwaj, and Ufuk Topcu
%	% <-this % stops a space
	%\thanks{This work is supported by }
	\thanks{ Bo Wu, Mohamadreza Ahmadi, Suda Bharadwaj and Ufuk Topcu are with the Department of Aerospace Engineering
and Engineering Mechanics, and the Institute for Computational
Engineering and Sciences (ICES), University of Texas, Austin, 201 E 24th
St, Austin, TX 78712. email: {\tt\small $\{$bwu3, mrahmadi, suda.b, utopcu$\}$@utexas.edu}}}% <-this % stops a space
\begin{document}
\maketitle
\begin{abstract}
Active classification, i.e., the sequential decision making process aimed at data acquisition for classification purposes, arises naturally in many applications, including medical diagnosis, intrusion detection, and object tracking. In this work, we study the problem of actively classifying dynamical systems with a  finite set of  Markov decision process (MDP) models.  We are interested in finding strategies that actively interact with the dynamical system, and observe its reactions so that the true model is determined efficiently  with high confidence. To this end, we present a decision-theoretic framework based on partially observable Markov decision processes (POMDPs). The proposed framework relies on assigning a classification belief (a probability distribution) to each candidate MDP model. Given an initial belief, some misclassification probabilities, a cost bound, and a finite time horizon, we design POMDP strategies  leading to  classification decisions. We present two different approaches to find such  strategies. The first approach computes the optimal strategy ``exactly'' using value iteration. To overcome the computational complexity of finding exact solutions, the second approach  is based on adaptive sampling to approximate the optimal probability of reaching a classification decision. We illustrate the proposed methodology using two examples from medical diagnosis and intruder detection.
\end{abstract}

\section{Introduction}
%Generally speaking, the classification problem is to study how to predict which class a case or object belongs to, based on possibly noise measurement data obtained from the observation \cite{breiman2017classification}. As machine learning has been applied to an increasing number of applications, such as object recognition \cite{lowe1999object}, cancer diagnosis \cite{esteva2017dermatologist}, video classification \cite{karpathy2014large} and human behavior classification \cite{cook2015activity}, to name a few, the complexity of the classification tasks also grows significantly \cite{gao2011active}. %On the one hand, many classification problems involve a large number of features, where a subset of such features may be detected by some weaker classifiers that should be selected strategically to have the optimized classification accuracy. On the other hand, the objects in some classification tasks involve complex spatial and temporal dynamics where each sensing action may also affect the sensing outcomes in the future. Furthermore, besides sensing activities, probing activities are also available to  perturb the behaviors of the object of interest, such that a fast and reliable classification can be achieved.
We consider the following scenario as a running example. A doctor needs to implement certain tests and treatments  to determine which disease a patient has in order to prescribe further specialized tests and/or treatments. The doctor makes a diagnosis from a finite set of (known) candidate diseases, which can  evolve dynamically, e.g. the symptoms can change, with time, tests, and treatment choices.   Each test or treatment, which is implemented sequentially, can incur uncertain reactions and some costs to the patient. For each candidate disease, it is possible to 
 %Then, based on the history of tests and treatments, if the disease is already diagnosed, it is possible to 
 probabilistically predict the patient's reaction  to tests or treatments based on available historical data.  
 At each step, the doctor is required to decide whether to make a diagnosis or to implement further tests to gather more information based on the history of the patient's test results as well as the predicted reactions from the patients. On the other hand, the diagnosis procedure should not last too long, since the patient's health condition may worsen. Additionally, the cost to reach a diagnosis should be bounded to make it affordable to the patient. Such a classification procedure requires sequentially planning  of the tests, while respecting a given cost.%, which is termed as the cost-sensitive active classification problem. 

\begin{figure}[t]
	\centering

		\begin{tikzpicture}[shorten >=1pt,node distance=3cm,on grid,auto, bend angle=20, thick,scale=0.65, every node/.style={transform shape}] 
		\node[bigblock,dotted] (s0)   {}; 
		\node [draw, diamond, aspect=1.5]  (s1) [right =   4 cm of s0] {~~~~~~~~~~~~~~~~~~~~~~~~~}; 	
		\node[draw,align=left] (s2) [ right = 4 cm  of s1]  {Classification\\ decision}; 
			\node[draw,align=left] (s3) [ below left = 2 cm and 1.5 cm  of s1]  {Active\\ classification}; 
		
		%\node[state] (s3)[above left = 0.5cm and 5.5cm of s1]   {}; 
	    %\node[state] (s4) [right= 3 cm of s3] {}; 	
		%\node[state] (s5) [below left = 1 cm and 1.5cm of s4]  {}; 

        \node at (0,0){\includegraphics[scale=0.2]{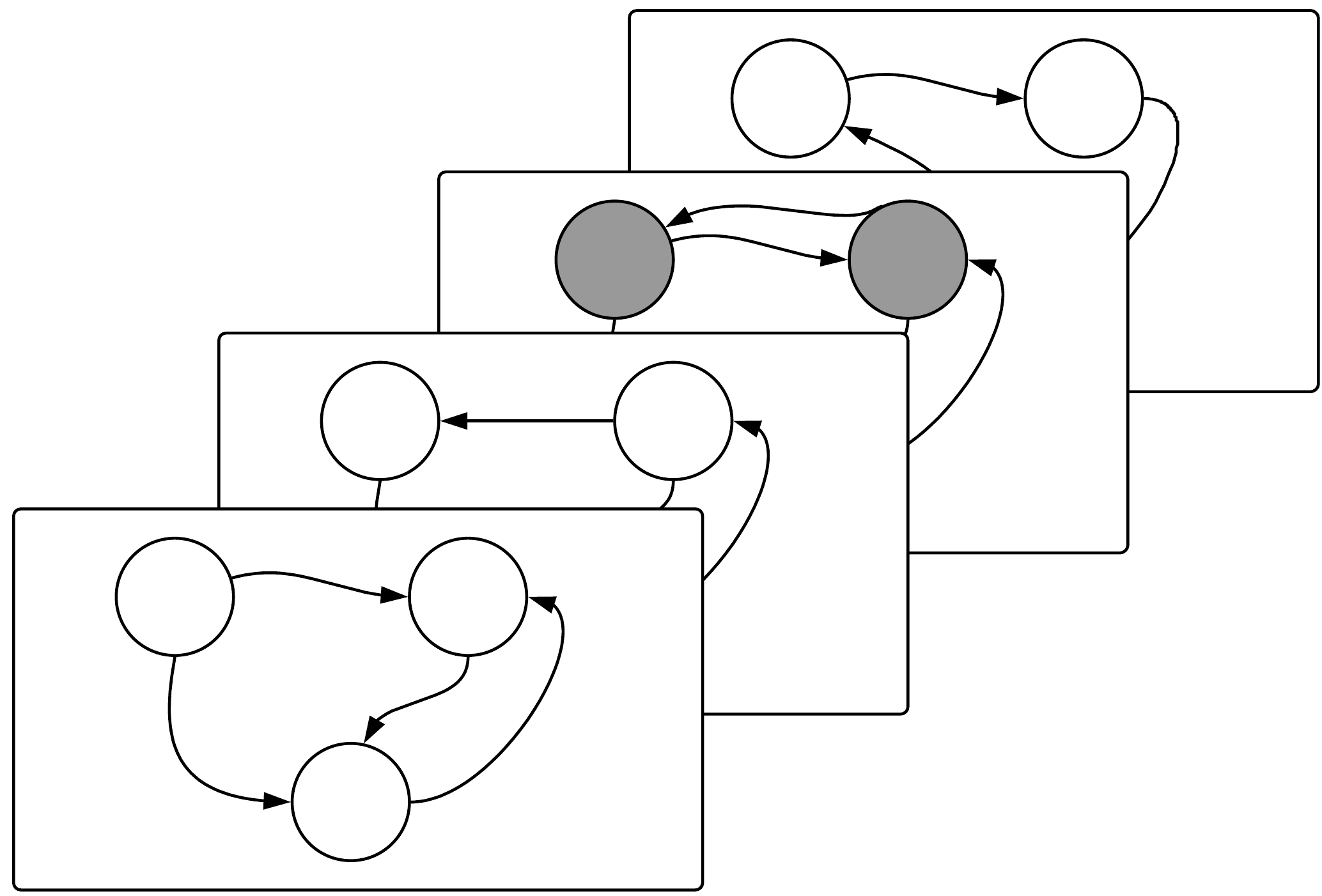}};
			
		\draw [->](0,1.5) -- (0,1.8)--(4,1.8)--(4,1.1);
		\draw [->](4,-1.2)--(4,-2)--(3.5,-2);
		\draw [->](1.5,-2)--(0,-2)--(0,-1.5);
        %\draw[->] (0,3) -| (3,1);
    	\node[text width=3cm] at (-2.2,1){A family of \\ candidate models};
    	\node[text width=3cm] at (1.8,-1.8){Action};
    	\node[text width=3cm] at (3,2){Observation};
    	\node[text width=3cm] at (7.5,0.2){Yes};
    	\node[text width=3cm] at (5.7,-1.5){No};
    	\node[text width=3cm] at (4.25,0){Misclassification \\ probability met?};
		\path[->]
				(s1) edge node [above] {} (s2)  
				%(s3) edge [bend left=30] node [above] {} (s4)  
				%(s4) edge [bend left=30] node [above] {} (s5) 	
				%(s5) edge [bend left=30] node [above] {} (s3) 	

		; %end path 
		\end{tikzpicture} 
	
	\caption{Active classification from a family of candidate models. The model with shaded states is the true one.}
	\label{fig:classification}
	\vspace{-5mm}
\end{figure}

The conventional approach for classification, which we refer to as  \emph{passive} classification, is an open-loop process that determines the class  of the object of interest based on whatever data provided. While a challenging problem itself, such an approach is not suitable for the scenario described above. Passive classification cannot decide what, when and how to obtain data to best assist the classification task. As a result, excessive amount of data (for example, the tests in diagnosis scenario) may have to be collected. Furthermore, potentially uninformative data and ignorance of the impact of the data collection process on the object of interest may make the  classification results less accurate. 

\emph{Active} classification \cite{hollinger2017active}, on the other hand, is a sequential decision-making process that has control over the data acquisition  and interacts closely with the object of interest.  Compared to passive classification, active classification is a more plausible approach in many practical classification applications, such as  medical diagnosis \cite{hauskrecht2000planning,ayer2012or}, intrusion detection, and target tracking \cite{adhikari2018applying}, where the object of interest has certain underlying dynamics belonging to a family of models. Each model is determined from the history data.  The classification process directly interacts with the state evolution of the object by selecting actions  and obtaining observations as a closed-loop system such that classification decisions can be reached more efficiently and accurately (see Figure~\ref{fig:classification}).  %

In this paper, in order to  capture the stochastic uncertainties of the outcomes associated with each action (such as a treatment) during classification, we assume the underlying dynamic model of the object to be classified belongs to a family of Markov decision processes (MDPs).   Then, we formalize an  active classification problem that can be cast into the framework of partially observable Markov decision processes (POMDPs).

As a comprehensive model for planning in partially observable stochastic environments, POMDPs characterize the uncertainties in the state evolution due to actuation imperfections or interactions with a stochastic environment as well as lack of environmental information or the observation noises. POMDPs  have been applied to a variety of applications in medical diagnosis \cite{hauskrecht2000planning,ayer2012or}, health care \cite{zois2013energy}, robotic sensing and perception \cite{hollinger2017active,myers2010POMDP,ji2007non,spaan2008cooperative} and wireless communication \cite{zhao2007decentralized}. Most existing results, however, are based on minimizing the costs incurred by actions and classification uncertainty as measured in terms of entropy~\cite{fox1998active}, where the classification accuracy is indirectly embedded in the rewards.

Because the underlying true model is not known, it is only possible to maintain a belief, which in the proposed setting, is a probability distribution over the possible MDP models and evolves  based on the history of observations and actions. The classification decision is made whenever the probability of  a certain MDP model being the true model  of the system exceeds a  given threshold based on the misclassfication probability, as shown in Figure \ref{fig:classification}.  Such a framework is particularly suitable to time- and cost-bounded classification tasks. The classification is translated to a sequential decision-making problem under uncertainties. Our objective is to find a  strategy to dynamically select classification actions such that with an optimized probability,  a classification decision can be made with a bounded cost. A similar problem was studied recently in \cite{Wang:2018:BPS:3237383.3237424}, where the belief in a POMDP must reach some goal states. However, in \cite{Wang:2018:BPS:3237383.3237424}, such requirement must be satisfied with probability $1$, which could be restrictive, considering the probabilistic  uncertainties in the belief dynamics.

Given the POMDP model and the desired misclassification probability, we propose two approaches to obtain the classification strategy that optimizes the probability to reach a classification decision. The exact solution is inspired by the cost bounded reachability in MDPs \cite{andova2003discrete}. It relies on obtaining the underlying MDP whose states are beliefs from the original POMDP. This recursive process terminates once the cost bound, a classification decision, or the maximum number of steps is reached. Then the optimal strategy can be computed from the obtained belief MDP.  To overcome the computational complexity of finding exact solutions for POMDPs, our second approach  adaptively samples the actions in the belief MDP to approximate the optimal probability to reach a classification decision. %It is then can be seen that such classification problem is a probabilistic reachability problem where the optimal strategy tries to reach the desired belief state where classification decisions can be made, at the same time, the cost can be bounded. 

%The notion of belief naturally fits in a classification problem. The classification decision can be made based on the belief over the classes to be identified where classification accuracy can be explicitly defined for each class.   % perform the  threat discrimination in dense urban terrain to detect potential armed threats, which requires the sufficient accuracy and timeliness. In general, our goal is to gather sufficient evidence to make the classification decision while minimizing the cost. We model the human behavior as a Partially Observable Markov Decision Processes (POMDP), where we can only infer if a human is hostile or not through both passive observations that are possibly noisy and active probing. We maintain a belief over the possible classes which is updated based on observations and actions performed. 
%The final classification decision can be made only when the confidence to characterize the class exceeds a threshold.  S, such as medical diagnosis and treatment or human intruder classification. In this paper, 

The rest of this paper is structured as follows. Section \ref{sec:preliminaries} provides the necessary modeling preliminaries. Section \ref{sec:Problem Formulation} formulates the problem. Two algorithms are presented in Section \ref{sec:Cost-Bounded Active Classification} to find the optimal strategy to reach a classification decision with the bounded cost. Simulation results are presented in Section \ref{sec:Simulation}. Section \ref{sec:Conclusion} concludes the paper.

%Active sensing in POMDP has been considered in robotic sensing to minimize the entropy of the belief \cite{fox1998active} or in a cooperative manner \cite{spaan2008cooperative} or localization or object detection purpose. 

\section{Preliminaries}\label{sec:preliminaries}
In this section, we describe preliminary notions and definitions used in the sequel.

\subsection{Markov Decision Processes \cite{puterman2014markov}}
 Formally, a Markov decision process (MDP)  is defined as follows.
\begin{definition}
	An MDP is a tuple $\mathcal{M}=(S,\hat{s},A,T,C)$ where
	\begin{itemize}
		\item $S$ is a finite set of states;
		\item $\hat{s}$ is the initial state;
		\item $A$ is a finite set of actions; 
		\item $T:S\times A\times S\rightarrow [0,1]$ is the probabilistic transition function with $
		T(s,a,s'):=P(s_t=s'|s_{t-1}=s,a_{t-1}=a),\forall    s,s'\in S, a\in A$; and 
		\item $C:S\times A\rightarrow\mathbb{R}_{\geq 0}$ is the cost function.
			 
	\end{itemize}
\end{definition}
In an MDP, a finite state-action path is $\omega=s_0a_0s_1a_1...$, where $s_i \in S,a_i\in A$ and $ T(s_i,a,s_{i+1})>0$.
\subsection{Hidden-Model MDPs  \cite{chades2012momdps}}\label{subsec:hmmdps}
The classification problem assumes that the object of interest is an unknown MDP that belongs to a known finite set of MDPs. Formally, it can be modeled as a hidden model MDP (HMMDP) where the   underlying true MDP is known to be one out of a finite set  $\mathcal{M}=\{\mathcal{M}_i|i\in\mathcal{C}\}$, $\mathcal{C}=\{1,...,L\}$, i.e., there are  $L$ candidate models in total.  We assume, without loss of generality, the MDPs share the same state space $S$, initial state $\hat{s}$, the action set $A$ and cost function $C$.

Given the initial state $\hat{s}$, we denote $\hat{b}_{\hat{s}}(i)$ as the initial  probability that $\mathcal{M}_i$ is the  underlying true model. Then an HMMDP is essentially a partially observable Markov decision process (POMDP) $\mathcal{P}=(Q,\pi,A,T,Z,O,C)$ where
	\begin{itemize}
        \item $Q = S\times\mathcal{C}$;
		\item $\pi:Q\rightarrow[0,1]$ is the initial state distribution with $\pi(s,i)=\hat{b}_{\hat{s}}(i),  \text{if } s=\hat{s}$ and $
		0  \text{ otherwise;}$
		\item $A$ is a finite set of actions;
		\item $T((s,i),a,(s',i'))=
		\begin{cases}
			T_i(s,a,s') & \text{ if } i=i'\\
			0  & \text{otherwise};
		\end{cases}	$ 
		\item $Z=S$ is the set of all possible observations;
		\item $O:Q\times Z\rightarrow [0,1]$ is the observation function with $O((s,i),z))=1$ if $z=s$ and $0$ otherwise;		 
		\item $C:S\times A\rightarrow\mathbb{R}_{\geq 0}$ is the cost function.
	\end{itemize}

The definition of $T$ implies that the underlying true model $\mathcal{M}_i$ will not change to any other model during the classification process. From the definition of the observation function $O$, it can be seen that the observation gives the perfect information about the state element $s$ in the state-model tuple $(s,i)$, but not the model element $i$. Therefore, when $s$ is observed, we denote $b(s,i)=b_s(i)\geq 0$ for simplicity. % as the belief that corresponds to the probability of being MDP model $i$. 

For classification purposes,  it is essential to keep track of a \emph{belief}  $b$ where $b(i)\in[0,1],\sum_{i\in\mathcal{C}}b(i)=1$, which is a probability distribution over all the possible MDP models. The belief space $B\subseteq \mathcal{R}^L$ represents the set of all possible beliefs. It is then possible to obtain a  belief MDP $\mathcal{B}=(B,\hat{b}_{\hat{s}},A,T,C_\mathcal{B})$ where
	\begin{itemize}
        \item $b\in B$ is the belief;
		\item the belief state transition probability is described by \begin{equation}\label{equation:HMMDP transition function}   T(b'_{s'},a,b_s)=\sum_i b_s(i)T_i(s,a,s'),
		\end{equation} 
		with
		\begin{equation}\label{equation:HMMDP belief update}
b'_{s'}(i)=\frac{T_i(s,a,s')b_s(i)}{\sum_j T_j(s,a,s')b_s(j)}; 	\text{ and }
		\end{equation}
		\item $C_\mathcal{B}(b_s,a)=C(s,a)$ is the cost associate to executing action $a$ at state $s$.
	\end{itemize}

\section{Problem Formulation}\label{sec:Problem Formulation}
%Assume that there is a strict horizon constraint $H$, no later than which that the classification decision has to be made. %, we are interested in computing a policy $\mu:b\rightarrow A$ to dynamically select actions based on our current beliefs over the classifications. 
To make a classification decision within a finite time bound $H$,  we keep track of the belief $b_s$ and claim the underlying model belongs to $\mathcal{M}_i$, whenever 
\begin{equation}\label{equation:reachability}
b_s(i)\geq\lambda_i,
\end{equation}
where $\lambda^i\in(0.5,1]$ denotes the minimum confidence to claim that the system belongs to $\mathcal{M}_i$. Equivalently, $1-\lambda^i$ represents the maximal acceptable error rate for $\mathcal{M}_i$. Such a belief state is a terminal state in the belief MDP. Once the terminal state is reached, the classification task is accomplished. To make the classification decision unique, i.e., there will be only one model to be declared true according to (\ref{equation:reachability}), we require that $\lambda_i>0.5,i\in\mathcal{C}$. We denote $G=\cup_{i\in\mathcal{C}} G_i$ where $G_i=\{b|b_s(i)\geq\lambda_i\}$ as the set of beliefs that should be reached to make a classification decision. %We also denote $P(\diamondsuit^{\leq H} G),$ as the probability to eventually reach the goal set $G_i$ with in $H$ steps. 

In addition to the reachability requirement, it is also essential to accomplish the classification task with a fixed amount of cost. That is, for   state-action path $\omega$ in the belief MDP $\mathcal{B}$ where $\omega=b_0a_0b_1a_1...b_N$ such that $b_i\notin G,i<H$ and $b_N\in G,N\leq H$, it is required that 
\begin{equation}\label{equation:cost}
   C(\omega)\leq D, \text{ where } C(\omega)=\sum_{i=0}^{N-1}C(a_i).
\end{equation} 
Here $\omega$ denotes a path that a classification decision is met for the first time, where $C(\omega)$ represents its accumulated cost. We denote $\Omega_G$ as the set of such paths that reach $G$ within time bounds $H$ and cost bound $D$. 

For the classification task, the objective is to compute a policy $\mu$ to dynamically select classification actions.  Since the classification should have bounded horizon and costs, the policy $\mu$ is %$\mu^*:B\times C\times I \rightarrow A$, or equivalently, 
$\mu=\{\mu_i|\mu_i:B\times E\rightarrow A,0\leq i\leq H \}$, which maps the current time step $i\in [0,H]$, the belief state $b\in B$ and  accumulated cost $e\in E=\mathbb{N}$, to an action $a\in A$.

Given a strategy $\mu$ to resolve the nondeterminism in the action selection in the HMMDP model and a path $\omega=b_0a_0b_1a_1...b_N$, it is possible to calculate its probability  
$$
P(\omega)=\prod_i T(b_i,a_i,b_{i+1}),
$$
where $\mu_i(b_i,c_i)=a_i, c_i=\sum_{j=0}^i C(a_j)$. Thus, with a strategy $\mu$, it is possible to compute the probability $P_\mu^{\leq D}(\diamondsuit^{\leq H}G)$ to reach a classification decision within time bound $H$ and cost bound $D$, where
$$
P_\mu^{\leq D}(\diamondsuit^{\leq H}G)=\sum_{\omega\in\Omega_G}P(\omega).
$$
All in all, our objective is to compute a strategy $\mu^*$  such that 
\begin{equation}{\label{equation:cost bounded reachability}}
\mu^*=\argmax_\mu P_{\mu}^{\leq D}(\diamondsuit^{\leq H}G),
\end{equation}
i.e., $\mu^*$ achieves the maximum probability to reach the decision region $G$ in $H$ steps with the accumulated cost no larger than $D$.

In many cases, reaching a classification decision may not be the only objective. For example, it is more desired to reach a diagnosis decision at the early or intermediate stages of a disease. For robotics applications, it is more desirable to classify the  object of interest without  colliding with it. Therefore, the belief may have to be constrained within a set of \emph{safe} belief $B_{safe}\subseteq B$ before the classification decision is reached. 
In such cases, the set of goal states $G$ is defined to be $G=\cup_{i\in\mathcal{C}} G_i$ where $G_i=\{b|b\in B_{safe}, b_s(i)\geq\lambda_i\}$ and the objective is to find a policy $\mu^*$ such that 
\begin{equation}{\label{equation:safe reachability}}
    \mu^*=\argmax_\mu P^{\leq D}_{\mu}(B_{safe}U^{\leq H} G),
\end{equation}
which denotes that maximized probability for the belief state to reach a classification decision $G$ while remaining in $B_{safe}$, considering the bounds on the horizon $H$ and cost $D$. It can be seen that (\ref{equation:cost bounded reachability})  is a special case of (\ref{equation:safe reachability}) where $B_{safe}=B$.

\section{Cost-Bounded Active Classification}\label{sec:Cost-Bounded Active Classification}
Given an HMMDP model with the corresponding MDPs $\mathcal{M}_i,i\in\mathcal{C}$, the time bound $H$ and cost bound $D$, in this section we introduce two approaches to solve the active classification problem as defined by (\ref{equation:cost bounded reachability}). Then we discuss how the solutions generalize to (\ref{equation:safe reachability}).

\subsection{Exact Solution}
Since the classification decision is defined on belief states as shown in (\ref{equation:reachability}),
the first step is to obtain a finite belief-state MDP $\mathcal{B}$ from the HMMDP model considering the accumulated cost. Such a procedure is called \emph{unfolding} and inspired by the similar treatment in MDPs \cite{andova2003discrete} for cost-bounded properties. In this paper, we extend this procedure to POMDPs. Without considering  the reachability and cost constraints as defined in (\ref{equation:reachability}) and (\ref{equation:cost}), the approach to obtain $\mathcal{B}$ is in Section \ref{subsec:hmmdps}. However, with cost constraints, the state space $Q$ of the belief MDP $\mathcal{B}$ is the product of the belief $B$ and the accumulated cost $E$. 

%In this section, we discuss the unfolding algorithm which promptly stops expanding a belief state when reachability constraints have been met or the cost constraints have been violated. 
The algorithm for obtaining $\mathcal{B}$ is shown in Algorithm \ref{alg:unfold}. It is essentially a recursive breadth-first traversal starting from the initial  state $\hat{q}=(\hat{b}_{\hat{s}},0)$ with initial belief $\hat{b}_{\hat{s}}$ and accumulated cost of $0$. The algorithm goes on for $H$ iterations or until there is no more state to be expanded,   as can be seen in Line (\ref{algorithm:terminate condition}). At each iteration $i$, we iterate through every state $q=(b,e)$ to be expanded (Line \ref{algorithm:q}), where $b$ is the belief state, $e$ is the cost accumulated so far. For each action $a\in A$ (Line \ref{algorithm:action} ), we calculate its next accumulated reward $e'$ d (Line \ref{algorithm:cost}). We terminate the expanding if $e'>D$ (Line \ref{algorithm:cost violation}), i.e., when the cost bound is exceeded. Otherwise,  the successor belief state $b'$ is computed (Line \ref{algorithm:belief}) as well as the transition probability (Line \ref{algorithm:transition}). If $b'\in G$,  a  classification decision is reached, otherwise  $q'=(b',e')$ is added to a set $Next$ and will be expanded in the next iteration (Line \ref{algorithm:next state}). By construction, it is not hard to see that no state in $\mathcal{B}$ will violate the cost bound and every state whose belief component belongs to $G$ is a terminal state.

\begin{figure}[!t]
		\removelatexerror
		\begin{algorithm}[H]
			\SetKwData{Left}{left}\SetKwData{This}{this}\SetKwData{Up}{up}
			\SetKwFunction{Union}{Union}\SetKwFunction{FindCompress}{FindCompress}
			\SetKwInOut{Input}{input}\SetKwInOut{Output}{output}
			\Input{  An HMMDP model with MDPs $\mathcal{M}_i,i\in\mathcal{C}$, $\mathcal{C}=\{1,...,L\}$, time bound $H$, cost found $D$, initial belief $\hat{b}_{\hat{s}}$ and reachability constraint as defined in (\ref{equation:reachability}).}
			\Output{Finite state MDP $\mathcal{B}=(Q=B\times E,\hat{b},A,T)$.}
			\BlankLine
			%\emph{special treatment of the first line}\;
			\nl $Q=\{(\hat{b}_{\hat{s}},0)\},Cur=\{(\hat{b}_{\hat{s}},0)\},i=0$;\
			
			\nl	\While {$i<H$ and $Cur\neq\emptyset$ \label{algorithm:terminate condition}}{
			        \nl $Next=\{\}$\;
    			    \nl \For{$q=(b_s,e)\in Cur$ \label{algorithm:q}}{
    				\nl \For{$a\in A$\label{algorithm:action}}{
    				\nl  $e'=e+C(s,a)$ \label{algorithm:cost}\;
    				\nl \If{$e'\leq D$ \label{algorithm:cost violation}}{
    				 \nl  \For{$s'\in S$}{
        				\nl Compute $b'$ according to (\ref{equation:HMMDP belief update}) \label{algorithm:belief}\;
        				\nl Let $q'=(b',e')$, $T(q,a,q')$ is computed according to (\ref{equation:HMMDP transition function})\label{algorithm:transition}\;
        				\nl \If {$q'\notin Q$}{
        				 \nl  $Q=Q\cup  q'$, \;
        				\nl \If {$b'\notin G$}{
        				  \nl  $Next=Next\cup q'$\label{algorithm:next state}\;
        				}}
        				}	
    				}}
    			}
					\nl $Cur=Next$\;}

			\nl \Return {$\mathcal{B}$}\;
			
			\caption{Cost-Bounded Unfolding}\label{alg:unfold}			
		\end{algorithm}
		\vspace{-3mm}
	\end{figure}

From the output of Algorithm \ref{alg:unfold}, it can be observed that the accumulated cost is already encoded in the state space of $\mathcal{B}$, therefore the cost function component in $\mathcal{B}$  is omitted. Once $\mathcal{B}$ is obtained, it is then possible to calculate the optimal strategy $\mu^*$ on $\mathcal{B}$ to achieve the following probability
\begin{equation}\label{equation:reachability1}
P_{max}(\diamondsuit^{\leq H} G)=P_{\mu^*}(\diamondsuit^{\leq H}G),
\end{equation}
i.e., the maximized probability to reach a classification decision within $H$ steps but without considering the cost bound. 

To get $\mu^*$, it is needed to compute the maximal probability, denoted as $P^q_{max}(\diamondsuit^{\leq k }G)$ to reach $G$ with in $k\in\{0,...,H\}$ steps from any $q\in Q$:
$$
P^{\hat{q}}_{max}(\diamondsuit^{\leq H }G) = P_{max}(\diamondsuit^{\leq H} G).
$$
We first divide the state set $Q$ into two disjoint subsets  $Q^{yes}=\{q=(b,e)|b\in G\}$ and $ Q^?=Q\backslash Q^{yes}$. The computation of $P^q_{max}(\diamondsuit^{\leq k }G)$ is essentially a  dynamic program as shown below. %The complexity is polynomial with respect to the number of states in this MDP \cite{rutten2004mathematical}.
\begin{align}
    P^q_{max}(\diamondsuit^{\leq i} G)&=1, ~~~\forall q\in Q^{yes},i\in\{0,...,H\},\\
    P^q_{max}(\diamondsuit^{\leq 0} G)&=0, ~~~\forall q\in Q^{?}, and \\
    P^q_{max}(\diamondsuit^{\leq i} G)&=\max_{a\in A} \sum_{q'\in Q}T(q,a,q')P^{q'}_{max}(\diamondsuit^{\leq i-1}G),\\
    &\forall q\in Q^?,i\in\{1,...,H\}\nonumber.
\end{align}
Then it can be seen that  for $i\in\{1,...,H\}$,
$$\mu^*_i(q)=\argmax_{a\in A}\sum_{q'\in Q}T(q,a,q')P^{q'}_{max}(\diamondsuit^{\leq i-1}G).$$

\subsection{Approximate Solution via Adaptive Sampling in Belief Space}

\begin{figure}[t]
		\removelatexerror
		\begin{algorithm}[H]
			\SetKwData{Left}{left}\SetKwData{This}{this}\SetKwData{Up}{up}
			\SetKwFunction{Union}{Union}\SetKwFunction{FindCompress}{FindCompress}
			\SetKwInOut{Input}{input}\SetKwInOut{Output}{output}
			\Input{  A state $q=(b_s,e)$ in $\mathcal{B}$, the number of samples $N_i$, time horizon $i$.}
			\Output{The estimated maximal probability $ \tilde{P}_i^{N_i}(q)$.}
			\BlankLine
		    \nl \If{$e>D$ or $i>H$ \label{algorithm:check1}}{
			    \Return {0}\;
			}
			\nl \If{$b_s\in G$\label{algorithm:check2}}{
			    \Return {1}\;
			}
			\nl \For{$a\in A$ \label{algorithm:init1}}{
			    \nl Sample a next state $q'=(b',e')$ by taking action $a$, where $e'=e+C(s,a)$\label{algorithm:sample1}\;
			    \nl $N_{a,i}^q=1$\;
			    \nl $\tilde{Q}(q,a)=$ CB-AMS($q',N_{i+1},i+1$) \label{algorithm:init2}\;
			}
			\nl $n=|A|$\label{algorithm:init3}\;
			\nl	\While {$n<N_i$ }{
			        \nl $a^*=\argmax_a(\frac{\tilde{Q}(q,a)}{N_{a,i}^q}+\sqrt{\frac{2\ln{n}}{N_{a,i}^q}})$\label{algorithm:action selection}\;
			        \nl Sample a next state $q'=(b',e')$ by taking action $a^*$, where $e'=e+C(s,a^*)$\label{algorithm:sample2}\;
			        \nl $\tilde{Q}(q,a^*)=\tilde{Q}(q,a^*)+$ CB-AMS($q',N_{i+1},i+1$) \label{algorithm:recur}\;
			        \nl $N_{a^*,i}^q=N_{a^*,i}^q+1$, $n=n+1$\;
		    }
            \nl $ \tilde{P}_i^{N_i}(q)=\frac{1}{N_i}\sum_a\tilde{Q}(q,a)$\;
			\nl \Return {$ \tilde{P}_i^{N_i}(q)$}\;
			
		\caption{Cost-bounded adaptive multi-stage sampling (CB-AMS)}\label{alg:ams}			
		\end{algorithm}
			\vspace{-3mm}
	\end{figure}
	
As can be seen in Algorithm~\ref{alg:unfold}, the exact solution involves constructing the belief MDP $\mathcal{B}$, whose state space grows exponentially with the time bound $H$. It leads to the well-known curse of dimensionality in the computation of $P_{max}(\diamondsuit^{\leq H}G)$ and hinders the application of Algorithm \ref{alg:unfold} to  problems with  large state spaces and a large time horizon $H$. Furthermore, the memory needed to store $\mathcal{B}$   grows exponentially as well. 

To address both the complex belief MDP $\mathcal{B}$ and intractable computation of $P_{max}(\diamondsuit^{\leq H}G)$, we now propose to use sampling algorithms \cite{chang2013simulation} to estimate the optimal classification probabilities. In particular, we leverage the adaptive multi-stage sampling algorithm (AMS) proposed in \cite{chang2005adaptive}.  The key observations for the active classification problem in (\ref{equation:cost bounded reachability}) that make the AMS a reasonable choice are as follows. First, since the  MDP models in $\mathcal{M}$ are typically  smaller than the belief MDP $\mathcal{B}$, it is easier to simulate sample paths in $\mathcal{B}$ than explicitly specifying $\mathcal{B}$ itself. Furthermore, AMS is  particularly suitable for models with a large state space but small action space \cite{chang2005adaptive}, where it is unlikely to revisit the same belief state multiple times in a sampled run, which is exactly the case the belief MDP $\mathcal{B}$ obtained with Algorithm \ref{alg:unfold}. It can be observed that in $\mathcal{B}$, the action space remains the same as  the original MDPs. Furthermore, the belief states in $\mathcal{B}$ takes values in a continuous space where generally it is very rare to revisit the same belief state with the same accumulated cost in a simulated run. Algorithm \ref{alg:ams} shows the belief state sampling procedure, termed as CB-AMS short for cost-bounded adaptive multi-stage sampling.

The input to Algorithm \ref{alg:ams} is a state $q=(b,e)$ in $\mathcal{B}$, the number of $N_i$ samples to be collected   and the current time step $i$. The output is $ \tilde{P}_i^{N_i}(q)$ which is the estimated maximal probability to reach a classification decision from state $q$ with horizon $H-i$. The initial call   is CB-AMS$((b_0,0),N_0,0)$ is for the initial belief $b_0$ and time horizon $0$. At Line \ref{algorithm:check1}, if the accumulated cost or the time horizon exceeds the bound, it will return $0$, since the probability to reach a classification decision is $0$, and there is no need to go further.  At Line \ref{algorithm:check2}, if the belief $b$ reaches its goal $G$, the algorithm will return $1$. Then from Line \ref{algorithm:init1} to Line \ref{algorithm:init2}, an initialization is performed to first try each action $a$ and sample a subsequent state $q'=(b',e')$, where $b'$ is sampled based on the transition probability as defined in (\ref{equation:HMMDP transition function}). At Line \ref{algorithm:init2}, CB-AMS is called recursively where $\tilde{Q}(q,a)$ denotes the sum of returned rewards (essentially the probabilities) by executing action $a$ from state $q$.  In Line \ref{algorithm:init3}, $n$ denotes  the number of samples collected and is initialized to be $|A|$. We will then enter the sampling loop that terminates when the number of samples $n$ reaches $N_i$. In each sampling iteration, we first select the action by the equation defined in Line \ref{algorithm:action selection}. The selection criterion balances between  sampling actions with a high average return value $\frac{\tilde{Q}(q,a)}{N_{a,i}^q}$ and trying actions that are less sampled  as denoted by $\sqrt{\frac{2\ln{n}}{N_{a,i}^q}}$, where $N_{a,i}^q$ denotes the number of times that the action $a$ has been sampled from the state $q$ at time horizon $i$. Obviously $\sum_a N_{a,i}^q=n$. It can be observed that the actions that lead to a higher probability to reach a classification decision are sampled more often and thus avoid exploring the belief states and actions that are unlikely to lead to a classification decision. 

The following theorem shows that the output of Algorithm \ref{alg:ams} converges to  $P_{max}(\diamondsuit^{\leq H} G)$  as the number of samples $N_i,0\leq i\leq H$ goes to infinity. 
\begin{thm}
If CB-AMS gets run with input $N_i$ for $i=0,...,H$ with arbitrary initial condition $q\in Q$, then
$$
\lim_{N_0\rightarrow\infty}\lim_{N_1\rightarrow\infty}\dots\lim_{N_H\rightarrow\infty}E[\tilde{P}_0^{N_0}(q)] = P^q_{max}(\diamondsuit^{\leq H}G),
$$
where $P^{q}_{max}(\diamondsuit^{\leq H}G)$ represents  the maximum probability to reach the decision region $G$ in $H$ steps with costs no larger than $D$ from state $q$.
\end{thm}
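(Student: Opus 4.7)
The plan is to prove the theorem by backward induction on the stage index $i$, reducing the statement to the convergence guarantee of the adaptive multi-stage sampling (AMS) algorithm of Chang et al.\ on a finite-horizon stochastic dynamic program. Conceptually, CB-AMS is AMS run on the cost-unfolded belief MDP $\mathcal{B}$, where the per-stage ``reward'' is the indicator of first entry into $G$ before the cost or horizon bound is violated, and the terminal value at $q=(b_s,e)$ is $\mathbf{1}[b_s\in G]$ on $\{e\le D\}$ and $0$ otherwise. Under this identification, $P^q_{max}(\diamondsuit^{\leq H-i}G)$ is precisely the Bellman optimal value at stage $i$ of that dynamic program, so the theorem becomes a direct instance of the AMS convergence result.

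For the base of the induction, consider a state $q=(b_s,e)$ reached either with $e>D$ or at stage $i=H$ with $b_s\notin G$. CB-AMS returns $0$ deterministically, matching the Bellman value since no feasible trajectory can collect the reward. Similarly, whenever $b_s\in G$ the algorithm returns $1$, again matching. These cases are handled at Lines~\ref{algorithm:check1} and~\ref{algorithm:check2} and contribute no sampling error.

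For the inductive step, suppose that for every one-step successor $q'=(b',e')$ of $q$ at stage $i+1$,
\begin{equation*}
\lim_{N_{i+1}\to\infty}\cdots\lim_{N_H\to\infty} E\bigl[\tilde{P}_{i+1}^{N_{i+1}}(q')\bigr] \;=\; P^{q'}_{max}(\diamondsuit^{\leq H-i-1}G).
\end{equation*}
Define the stage-$i$ action values $Q^*_i(q,a)=\sum_{q'} T(q,a,q')\,P^{q'}_{max}(\diamondsuit^{\leq H-i-1}G)$, so that $P^q_{max}(\diamondsuit^{\leq H-i}G)=\max_{a\in A} Q^*_i(q,a)$. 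After the inner limits are taken, the per-call returns from the recursive CB-AMS invocation at Line~\ref{algorithm:recur} become asymptotically unbiased, bounded-in-$[0,1]$ samples of $Q^*_i(q,a)$ (the successor $q'$ at Lines~\ref{algorithm:sample1} and~\ref{algorithm:sample2} is drawn according to $T(q,a,\cdot)$). The UCB1-style rule at Line~\ref{algorithm:action selection} is exactly the AMS selection criterion, whose regret at each internal node is $O(\log N_i/N_i)$. Consequently, $E[\tilde{P}_i^{N_i}(q)] = \frac{1}{N_i}\sum_a E[\tilde{Q}(q,a)]$ converges to $\max_a Q^*_i(q,a)$ as $N_i\to\infty$, which is exactly $P^q_{max}(\diamondsuit^{\leq H-i}G)$. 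Setting $i=0$ yields the theorem.

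The principal obstacle is that the UCB1 regret analysis is stated for i.i.d.\ rewards in $[0,1]$, whereas here the ``rewards'' at internal nodes are themselves random outputs of recursive CB-AMS calls whose distributions depend on the nested sample budgets $N_{i+1},\dots,N_H$. This is precisely why the theorem is phrased as a nested limit with $N_H$ taken first: passing the inner limits before the outer one ensures that, by the inductive hypothesis, the recursive returns concentrate on the true $Q^*$ values, so the AMS regret bound may be invoked at the outer level with the correct mean rewards. Handling this interchange cleanly, and checking that the $[0,1]$ boundedness required for Hoeffding-type inequalities is preserved throughout the recursion (which holds because every CB-AMS return is a convex combination of values in $\{0,1\}$), constitutes the main technical content; everything else is a direct transcription of the argument in Chang et al.
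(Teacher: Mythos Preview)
Your proposal is correct and follows essentially the same approach as the paper: both identify CB-AMS as the AMS algorithm of Chang et al.\ applied to the cost-unfolded belief MDP $\mathcal{B}$ with a $\{0,1\}$-valued reward structure that makes the optimal value function coincide with $P^q_{max}(\diamondsuit^{\leq H-i}G)$, and then invoke the AMS convergence theorem. The only difference is presentational: the paper states the reward $R(q,a)$ explicitly, verifies $V^\mu_i(q)=P^\mu_q(\diamondsuit^{\leq H-i}G)$, and defers the rest to Theorem~3.1 of \cite{chang2005adaptive}, whereas you unpack the backward induction and the nested-limit argument that underlie that theorem.
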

\begin{proof}
In Algorithm \ref{alg:ams}, we are effectively sampling from the MDP $\mathcal{B} $ with the state $q=(b_s,c)$ at time horizon $i$. From Line \ref{algorithm:check1} and \ref{algorithm:check2}, once $c$ exceeds the cost bound $D$ or $i$ exceeds the horizon bound $H$, no reward will be returned. Otherwise, once $b_s$ reaches the goal $G$ within the cost and horizon bound, a reward of $1$ will be returned. We denote $R(q,a)$ as the reward by executing action $a$ at state $q$. Note that this reward is not to be confused with the cost function $C$ that represents the classification cost, for example, the test and treatment costs for medical diagnosis. Equivalently, we assign $R(q,a)=1$ for all $ a\in A, q=(b_s,c),c\leq D, b_s\in G$ and $R(q,a)=0$ for $q$'s otherwise. Given a strategy $\mu=\{\mu_t|\mu_t:Q\rightarrow A,0\leq t\leq H\}$, the value function $V_i(q)$ for state $q$ and time step $i$ is
$$
V^\mu_i(q)=E[\sum_{t=i}^{H}R(q,\mu_t(q_t))|q_i=q]
$$
with $q\in Q,i=0...,H$. Given a state $q=(b_s,c)$, $V^\mu_i(q)=0$ if $c>D$ and $V^\mu_i(q)=1$, if $c\leq D$ and $b_s\in G$.  $V^\mu_{H+1}(q)=0$. Furthermore, once reaching a state $q=(b_s,c)$ with $c>D$ or $b_s\in G$, the algorithm will return the corresponding reward and such $q$ will not have successive states.  

$V^\mu_i(q)$ can be equivalently written as 
$$V^\mu_i(q)=\sum_{q'\in Q}P(q,\mu_i(q),q)V^\mu_{i+1}(q').$$
Therefore, it is not hard to see that $V^\mu_i(q)=P^\mu_{q}(\diamondsuit^{\leq H-i}G)$. Algorithm \ref{alg:ams} serves to approximate the optimal value function $V_i^*=\max_{\mu} V_i^\mu$ (equivalently the optimal probability).

Once we get the MDP with the reward structure $R$, where the value function $V_i(q)$ essentially denotes the probability to reach $G$ from $q$  within $H-i$ steps and the bounded cost, the rest of proof  follows  Theorem 3.1 in \cite{chang2005adaptive}. %Note that in its proof, it is assumed that $R_{max}\leq \frac{1}{H}$, where $R_{max}=\max_{q,a}R(q,a)$. However, such assumption is not crucial and can be relaxed as discussed in \cite{chang2005adaptive}. 
\end{proof}
Once the optimal value function (probability) has been estimated by Algorithm \ref{alg:ams}, it is then possible to extract the policy at each state $q$ and horizon $i$ by $$\mu_i(q)^*=\argmax_a \sum_{q'\in Q}P(q,a,q)\tilde{V}_i^*(q'),$$ where $\tilde{V}_i^*(q)=\tilde{P}_i^{N_i}(q)$.

Note that this sampling approach is based on the given POMDP model, which has been obtained from history data, for example, the database of medical diagnosis. Therefore,  at Line (\ref{algorithm:sample1}) and  Line (\ref{algorithm:sample2}) of Algorithm \ref{alg:ams}, we sample from known distributions as defined by the POMDP model $\mathcal{P}$, instead of trying medication actions to  patients and observe their reactions.

\subsection{Reach-Avoid in Belief Space}
Algorithm \ref{alg:unfold} and Algorithm \ref{alg:ams} can be generalized naturally for (\ref{equation:safe reachability}) as the optimization objective, where the belief  $b$ should remain in  $B_{safe}\subseteq B$. The unfolding procedure is almost identical with Algorithm \ref{alg:unfold} except in Line \ref{algorithm:next state}. There will be an additional constraint, where $b'\in B_{safe}$, i.e., if $b'$ is not in $B_{safe}$, it will not be expanded in the next iteration. For Algorithm \ref{alg:ams}, the difference is in Line \ref{algorithm:check1}, where if $b\notin B_{safe}$, it will return $0$.

%%%%%%%%%%%%%%%%%%%%%%%%%%%%%%%%%%%%%%%%%%%%%%%%%%%%%%%%%%%%%%%%%%%%%%%%%%%%%%%%%%%%%%%%%%%%%%%%%%%%%%%%%%%%%%%%%%%%%%%%%%%%%%%%%%%%%%%%%%%%%%%%%%%%%%%%%%%%%%%%%%%%%%%%%%%%%%%%%%%%%%%%%%%%%%%%%%
%%%%%%%%%%%%%%%%%%%%%%%%%%%%%%%%%%%%%%%%%%%%%%%%%%%%%%%%%%%%%%%%%%%%%%%%%%%%%%%%%%%%%%%%%%%%%%%%%%%%%%%%%%%%%%%%%%%%%%%%%%%%%%%%%%%%%%%%%%%%%%%%%%%%%%%%%%%%%%%%%%%%%%%%%%%%%%%%%%%%%%%%%%%%%%%%%%

\section{Simulation}\label{sec:Simulation}
\subsection{Medical Diagnosis and Treatment}
In this section, we introduce a particular example in medical diagnosis, where HMMDP model is used to capture how stages for a family of diseases evolve based on  tests and treatment as shown in Figure \ref{fig:hmmdp}. In particular, the states in the MDP represents the early stage, medium stage and late stage of the disease, with an increasing order of the severity. There are two possible diseases modeled by two MDPs $\mathcal{M}_1$ and $\mathcal{M}_2$ that need to be diagnosed and treated. The action space includes three actions, namely 
\begin{enumerate}
    \item $a_1$ for treatment 1,
    \item $a_2$ for treatment 2,
    \item $a_3$ for doing nothing but observe,
\end{enumerate}
where treatment $i$ as denoted by $a_i$, is more effective on the disease $i$ with costs $C(a_i)$. For $a_3$, it does not incur any cost  but the disease may have a higher probability to evolve to a later stage. The transition probabilities are as shown in the following matrices (\ref{equation:transtion probability}), where $T_i(a)(j,k)=T_i(s_j,a,s_k)$. The costs are as defined in (\ref{equation:cost at each state}) where $C(i,j)=C(s_i,a_j)$ .

		\begin{figure}[t]
			\centering
				\begin{tikzpicture}[shorten >=1pt,node distance=3cm,on grid,auto, bend angle=20, thick,scale=1, every node/.style={transform shape}] 
				\node[state,initial] (s0)   {$s_1$}; 
				\node[state] (s1) [right= of s0] {$s_2$}; 	
				\node[state] (s2) [right= of s1]  {$s_3$}; 
		        \node[text width=3cm] at (0.8,-1) {stage 1};
				\node[text width=3cm] at (3.8,-1) {stage 2};
				\node[text width=3cm] at (6.8,-1) {stage 3};
 	
				\path[->]
				(s0) edge [loop above] node {}()
				(s0) edge [bend left] node {} (s1)
				(s1) edge [loop above] node {}()
				(s1) edge [bend left] node {} (s0)  
				(s1) edge node {} (s2)  
				(s2) edge [loop above] node {}()
				; %end path 	
				\end{tikzpicture} 
			
			\caption{Feasible state transitions in HMMDP.}
\vspace{-4mm}
			\label{fig:hmmdp}
		\end{figure}
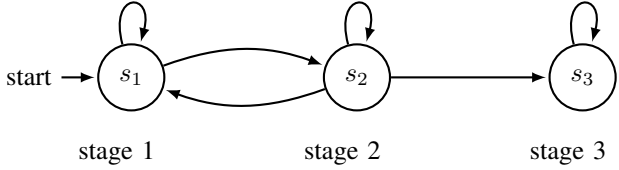

\begin{myequation}\label{equation:transtion probability}
\begin{split}
T_1(a_1)=\begin{bmatrix}
   0.8 &0.2 &0 \\
   0.7&0.2&0.1 \\
   0  &0 &1
\end{bmatrix},
T_1(a_2)=\begin{bmatrix}
   0.6 &0.4 &0 \\
   0.2 &0.4    &0.4 \\
   0 &0  &1
\end{bmatrix},\\
T_1(a_3)=\begin{bmatrix}
   0.5 &0.5 &0 \\
   0.1 &0.6    &0.3 \\
   0 &0  &1
\end{bmatrix}.
T_2(a_1)=\begin{bmatrix}
   0.6 &0.4 &0 \\
   0.1&0.5&0.4 \\
   0&0&1
\end{bmatrix},\\
T_2(a_2)=\begin{bmatrix}
   0.9 &0.1 &0 \\
   0.8 &0.1    &0.1 \\
   0 &0  &1
\end{bmatrix},
T_2(a_3)=\begin{bmatrix}
   0.3 &0.7 &0 \\
   0.1 &0.3   &0.6 \\
   0 &0  &1
\end{bmatrix},
\end{split}
\end{myequation}

\begin{myequation}\label{equation:cost at each state}
C=\begin{bmatrix}
   2 &5&0 \\
   6 &4&0 \\
   7 &7 &0
\end{bmatrix}
\end{myequation}

The diagnosis decision is made for disease $1$ or $2$ if one of following  is satisfied. 
\begin{equation}\label{equation:diagnosis requirement}
    b_s(1)\geq \lambda_1\text{ or } b_s(2)\geq \lambda_2,
\end{equation}
with the initial belief $\hat{b}_{s_1}=(0.5,0.5)$, cost constraint $D=10$. One step unfolding according to the Algorithm \ref{alg:unfold} is shown in Figure \ref{fig:unfolding 1}. If $\lambda_1=0.8,\lambda_2=0.7$, it can be seen that if $a_2$ is executed, there is  $0.25$ probability that the disease is diagnosed to be type $1$ (since $b_{s_2}(1)=0.8$) at the shaded state $q_4$, with a cost of $5$. Therefore, $q_4$ will not be included in the states to be expanded in the next iteration. 

	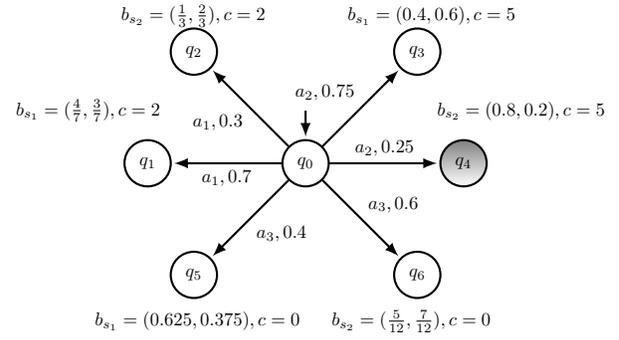
\begin{figure}
		\centering	
\begin{tikzpicture}[shorten >=1pt,node distance=3cm,on grid,auto, bend angle=20, thick,scale=0.7, every node/.style={transform shape}] 
				\node[state] (s0)   {$q_0$}; 
				\node[state] (s1) [left =of s0] {$q_1$}; 
				\node[state] (s2) [above left  = of s0]  {$q_2$}; 
				\node[state] (s3) [above right= of s0] {$q_3$}; 
				\node[state,shade] (s4) [right= of s0] {$q_4$}; 
				\node[state] (s5) [below left= of s0] {$q_5$}; 
				\node[state] (s6) [below right= of s0] {$q_6$}; 
                \node[text width=3cm] at (-4,1) {$b_{s_1}=(\frac{4}{7},\frac{3}{7}),c=2$};		
                \node[text width=3cm] at (-2,2.8) {$b_{s_2}=(\frac{1}{3},\frac{2}{3}),c=2$};
                \node[text width=4cm] at (2.8,2.8) {$b_{s_1}=(0.4,0.6),c=5$};
                \node[text width=4cm] at (4.5,1) {$b_{s_2}=(0.8,0.2),c=5$};
                \node[text width=4cm] at (-2,-3) {$b_{s_1}=(0.625,0.375),c=0$};;
                \node[text width=4cm] at (2.5,-3) {$b_{s_2}=(\frac{5}{12},\frac{7}{12}),c=0$};;
                \draw [->] (0,1) -- (s0);
				\path[->]
				
				(s0) edge node {$a_1,0.7$} (s1) 
				(s0) edge node {$a_1,0.3$} (s2) 
				(s0) edge node {$a_2,0.75$} (s3) 
				(s0) edge node {$a_2,0.25$} (s4) 
				(s0) edge node {$a_3,0.4$} (s5) 
				(s0) edge node {$a_3,0.6$} (s6) 
				; %end path 	

				\end{tikzpicture} 
		\caption{One step unfolding}\label{fig:unfolding 1}
	\end{figure}
	
We implement both Algorithm \ref{alg:unfold} and Algorithm \ref{alg:ams} in C++. For Algorithm \ref{alg:unfold}, the resulting MDP model is input into the PRISM model checker \cite{kwiatkowska2011prism} to compute the maximum probability (\ref{equation:reachability1}). For Algorithm \ref{alg:ams}, we set $N_i=2000,0\leq i\leq H$. We also store the calculated values of $\tilde{P}_i^{N_i}(q)$ to avoid recomputing them. The results are as shown in Figure \ref{fig:prob} to illustrate how the maximum probability to diagnose the disease increases with the horizon $H$.  % where $\lambda_a$ denotes $\{\lambda_1=0.8,\lambda_2=0.7\}$,$\lambda_b$ denotes $\{\lambda_1=0.9,\lambda_2=0.8\}$,$\lambda_c$ denotes $\{\lambda_1=0.95,\lambda_2=0.9\} $. 
Typically, the higher the classification threshold, the less likely that  belief states to make a classification decision can be encountered and thus there is less probability of  successful diagnosis. Furthermore, the CB-AMS algorithm is able to closely estimate the optimal probability. 

\begin{figure}
    \centering
    \includegraphics[scale=0.18]{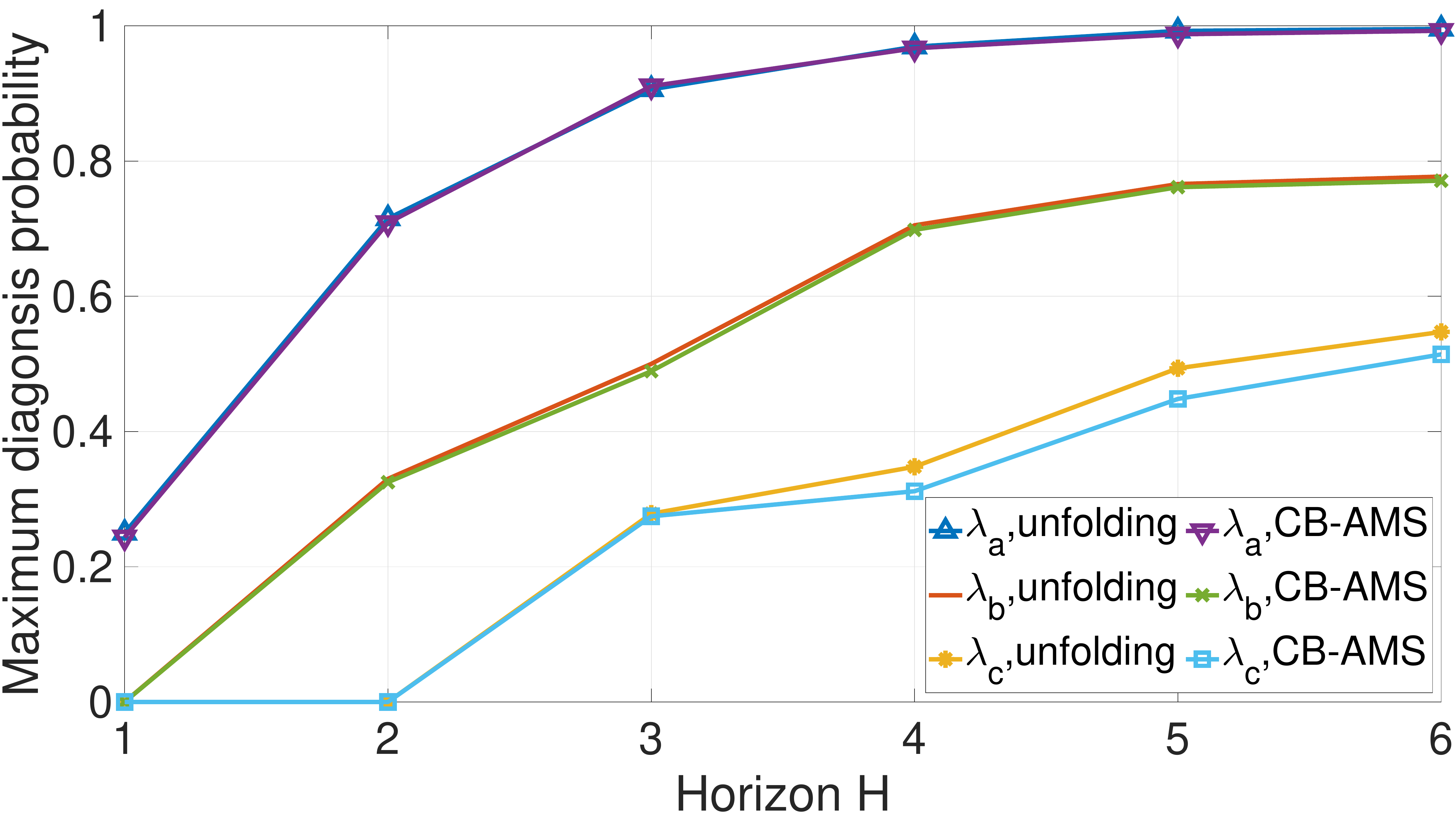}
    \caption{Maximum probability  for (\ref{equation:reachability1}).  $\lambda_a:\{\lambda_1=0.8,\lambda_2=0.7\}$, $\lambda_b:\{\lambda_1=0.9,\lambda_2=0.8\}$, $\lambda_c:\{\lambda_1=0.95,\lambda_2=0.9\}$.}
    \label{fig:prob}
    \vspace{-3mm}
\end{figure}

\begin{figure}
    \centering
    \includegraphics[scale=0.18]{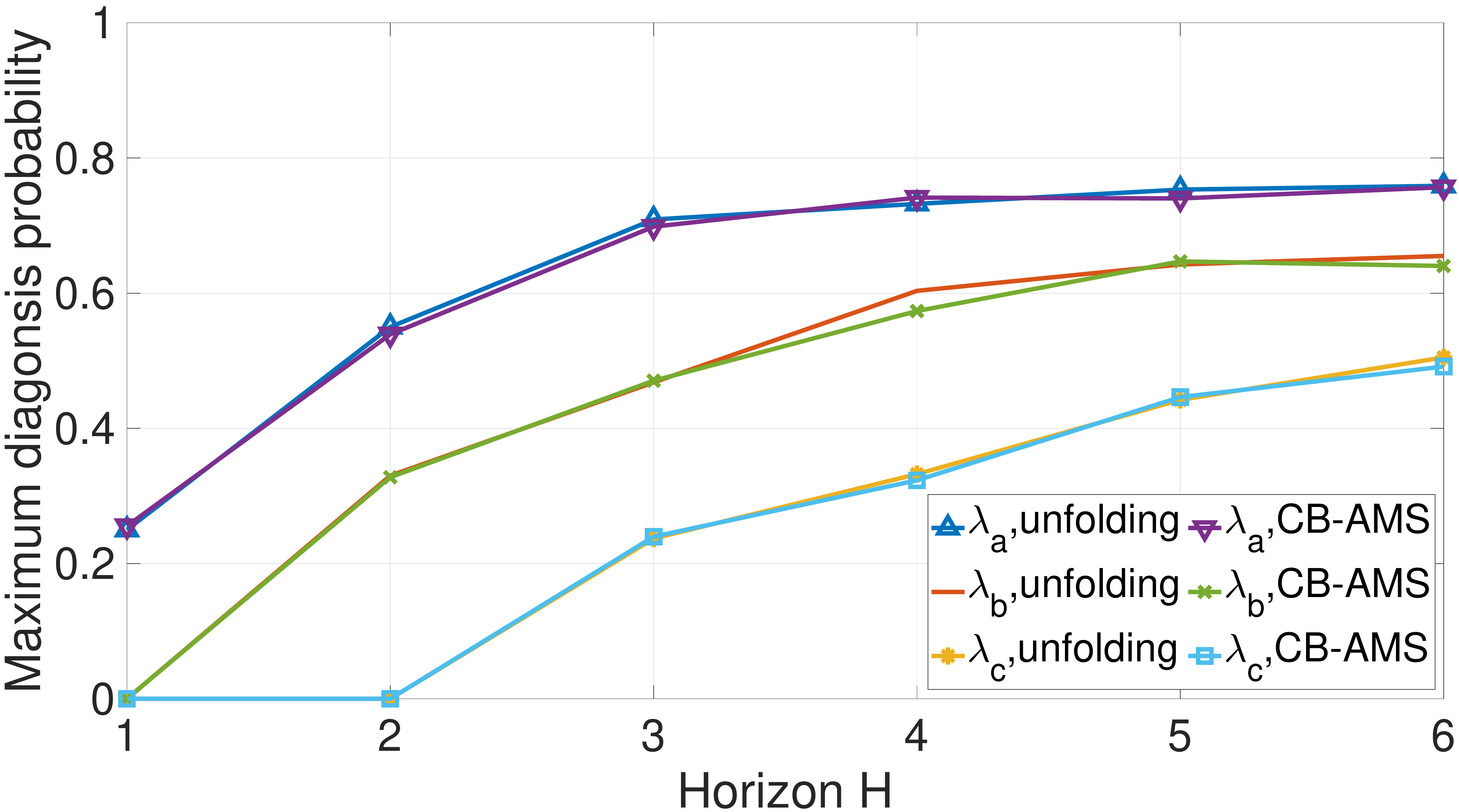}
    \caption{Maximum probability  for (\ref{equation:safe reachability}) with $B_{safe}$ defined in (\ref{equation:bsafe}). $\lambda_a:\{\lambda_1=0.8,\lambda_2=0.7\}$, $\lambda_b:\{\lambda_1=0.9,\lambda_2=0.8\}$, $\lambda_c:\{\lambda_1=0.95,\lambda_2=0.9\}$.}
    \label{fig:prob1}
\end{figure}
Now let's look at  (\ref{equation:safe reachability}) where we try to reach a classification decision without visiting undesired belief states. For the medical diagnosis example, the desired property is to diagnose the disease without reaching the late stage of the disease, where the corresponding $B_{safe}$ can be defined as
\begin{equation}{\label{equation:bsafe}}
    B_{safe}=\{b_s|s\neq s_3\}. 
\end{equation}
For the medical diagnosis example, the results are as shown in Figure \ref{fig:prob1}. The same trend as in Figure \ref{fig:prob} can be observed where the maximum probability to safely reach $G$ without going out of $B_{safe}$ increases with a longer time horizon. However, due to the extra safety constraint, the maximum probabilities also decreases, compared to the results without safety constraint. Again, the CB-AMS algorithm performs well to estimate the optimal probability. 

The run times for both algorithms with regards to specification (\ref{equation:reachability1}) and (\ref{equation:safe reachability}) are as shown in Table \ref{table:1} and Table \ref{table:2}. For Algorithm \ref{alg:unfold}, the run time consists of the time to get the belief MDP model $\mathcal{B}$ and compute the optimal probability. All the experiments were run on a laptop with 2.6GHz i7 Intel\textsuperscript{\textregistered} processor with 16GB memory.  It can be seen that for a small time horizon $H$, exact solution outperforms sampling in time consumption, as the number of the states in $\mathcal{B}$ is small. But as the horizon $H$ grows, the run time from sampling grows slower than the exact horizon and thus is more favorable.
\begin{table}[t]
	\centering
	\caption{Run times for (\ref{equation:reachability1}) in seconds}
	\begin{tabular}{|l|l|l|l|l|l|l|l|}
		\hline
		\multicolumn{2}{|l|}{Horizon $H$} & $1$ & $2$ & $3$ & $4$ & $5$ & $6$ \\ 
		\hline
	    \multirow{2}{*}{$\lambda_a$} & Unfold &0.31 & 0.93 & 3.13 & 9.88 & 24.24 & 224.12  \\\cline{2-8}
		
		& CB-AMS& 0.38 & 1.54 & 4.84 & 17.01 & 50.74 & 133.90   \\
		\hline
		\multirow{2}{*}{$\lambda_b$} & Unfold & 0.01 &	1.00 &	4.74 &	21.14 &	131.06 &	930.42 \\\cline{2-8}
		
		 & CB-AMS & 0.49 &3.83& 5.62 &25.52& 78.25 &196.18 \\
		\hline
		\multirow{2}{*}{$\lambda_c$} & Unfold & 0.01 &	0.06 &	4.48 &	20.24 &	189.67 &	1418.75 \\\cline{2-8}
		
		 & CB-AMS & 0.40 & 1.71 & 5.54 & 26.40 & 103.84 & 272.88  \\
		\hline
	\end{tabular}
 \vspace{-3mm}
	\label{table:1}
\end{table}

\begin{table}[h]
	\centering
	\caption{Run times for (\ref{equation:safe reachability}) in seconds}
	\begin{tabular}{|l|l|l|l|l|l|l|l|}
		\hline
		\multicolumn{2}{|l|}{Horizon $H$} & $1$ & $2$ & $3$ & $4$ & $5$ & $6$ \\ 
		\hline
	    \multirow{2}{*}{$\lambda_a$} & Unfold & 0.40 & 0.71 & 2.39 & 7.62 &30.34 & 229.89 \\\cline{2-8}
		
		& CB-AMS& 0.37& 1.38 & 4.26 & 14.39 & 39.48 & 98.97 \\
		\hline
		\multirow{2}{*}{$\lambda_b$} & Unfold & 0.01 &	0.92 &	3.04 &	12.31 &	94.13 &	1054.71 \\\cline{2-8}
		
		 & CB-AMS & 0.37 & 1.40& 6.11 & 21.67 & 73.12 & 235.12  \\
		\hline
		\multirow{2}{*}{$\lambda_c$} & Unfold & 0.01& 0.12 & 4.35 & 17.76 & 180.36 & 2211.70 \\\cline{2-8}
		
		 & CB-AMS & 0.39 & 1.72 & 6.20 & 28.87 & 91.46 & 330.35  \\
		\hline
	\end{tabular}
    \vspace{-3mm}
	\label{table:2}
\end{table}

\subsection{Intruder Classification}
%Active classification can be crucial for security applications. Specifically, i
In automated surveillance applications \cite{bharadwaj2017synthesis}, it is often necessary to determine whether a detected target is a potential threat before deploying security resources for further intervention. %For example, the work in \cite{bharadwaj2017synthesis} synthesizes controllers for mobile sensors with quantitative surveillance requirements.  %However, it is often the case that detected targets are not threats, such as when small animals or disturbances set off alarms. In these cases, we want to monitor the situation and determine whether a detected target is a threat before we deploy the mobile sensors to perform active surveillance. We assume we can always passively observe the target's location, which is generally possible through radar or some other static sensors in the environment,

\begin{figure}
\vspace{-2.8mm}
\centering
\subfloat{
\includegraphics[scale=0.25]{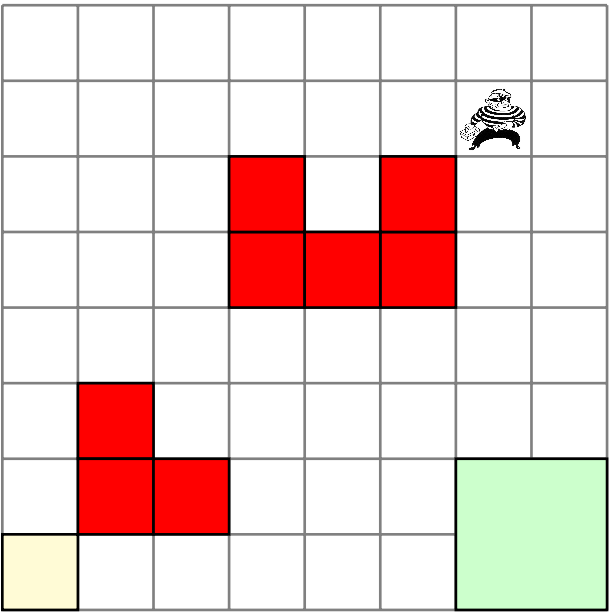}}
%\hspace{1.5cm}}
%\hfill
\subfloat{
\includegraphics[scale=0.25]{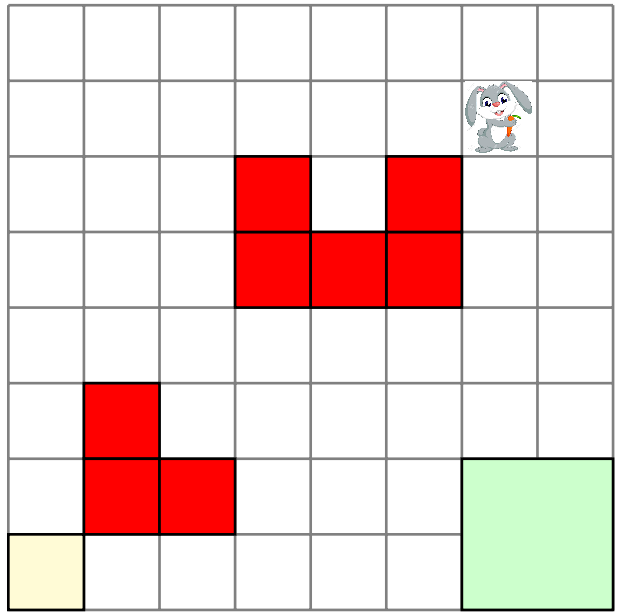}}
		\caption{Gridworld with two potential classes of targets - hostile (left) and safe (right). Green cells are sensitive areas. Red cells are obstacles such as buildings. The yellow cell is a hiding place for the hostile intruder. } \label{fig:grid}
		\vspace{-2.8mm}
\end{figure}
We present an example of a $8\times8$ gridworld shown in Figure~\ref{fig:grid}. The target is not allowed to reach the green zone. The target is assumed to be either  a hostile human intruder (Class 1) or an animal (Class 2) which has no real threat. The behaviors of the hostile and safe intruders are characterized by two MDPs $\mathcal{M}_1$ and $\mathcal{M}_2$, respectively. The state  in the MDPs refers to target's location in the gridworld, which is  observable through radar or some other static sensors in the environment.  At each time step the target moves to one of its neighbouring cells randomly.  The actions available to the automated surveillance are 
\begin{enumerate}
    \item $a_1$ for passive observation,
    \item $a_2$ for alarm through  loudspeakers.
\end{enumerate}
If $a_1$ is chosen, a human intruder will attempt to move to the sensitive area (the green region) as denoted by $S_{green}$. When $a_2$ is executed, the animal will be startled and  move in all directions with equal probability, while the human intruder will tend to move towards the yellow region  in Figure \ref{fig:grid} ostensibly to hide. The human moves randomly  but generally heads to  $S_{green}$ or yellow region for action $a_1$ or $a_2$. The randomness is to capture different human preferences and  human's inherent decision uncertainty. The costs for $a_1$ and $a_2$ at each state are $1$ and $3$, respectively.

The corresponding $B_{safe}$ can be defined as
$$
    B_{safe}=\{b_s|s\notin S_{green}\}. 
$$

The classification decision is  made  if one of following  is satisfied. 
$$
    b_s(1)\geq 0.7\text{ or } b_s(2)\geq 0.7,
$$
with the initial belief $\hat{b}_{\hat{s}}=(0.5,0.5)$, step bound  $H=6$ and cost bound $D=8$, where $\hat{s}$ is the initial state that intruders at as seen in Figure \ref{fig:grid}.
%We aim to classify a target as either hostile or safe. Moving forward, we refer to the hostile target as class 1 and the safe target as class 2.

%At each time step the target can move to one if its neighbouring cells. We assume we have MDP models of a hostile target and a non-hostile target. These models capture our knowledge of the behaviour of different classes of targets. Specifically, we assume that under no interference, both classes of targets will move towards the green region. 

%However, we insert a degree of randomness in the behaviour of each class to account for possible uncertainty in our models. For example, if the MDP model of class 1 assumes that the target will take action South from the current state, we assume there is a small probability of the target taking one of the actions North, East or West instead, parametrized by  a randomness parameter $0 \leq \gamma_1 \leq 1$. If $\gamma = 0$, there is no uncertainty and and if $\gamma = 1$, we know nothing about the target's behaviour which means it can take all actions with equal probability.

%We have the option of taking an action - sending an announcement through a loudspeaker. The MDP model for target class 1 states the target will move towards the yellow region in Figure \ref{fig:grid} (again with randomness injected) ostensibly to hide. The MDP model for target class 2 will move in all directions with equal probability. This action will make classifying easier, however, taking this action incurs more cost.

\begin{figure*}[ht]
	\begin{minipage}{5.0cm}
		\subfloat[$t=0$ \label{fig:case1t2}]{
			\includegraphics[scale=0.13]{grid_class1.png}\hspace{-0.2cm}
		}
		\subfloat[$t=3$ \label{fig:case1t3}]{
			\includegraphics[scale=0.13]{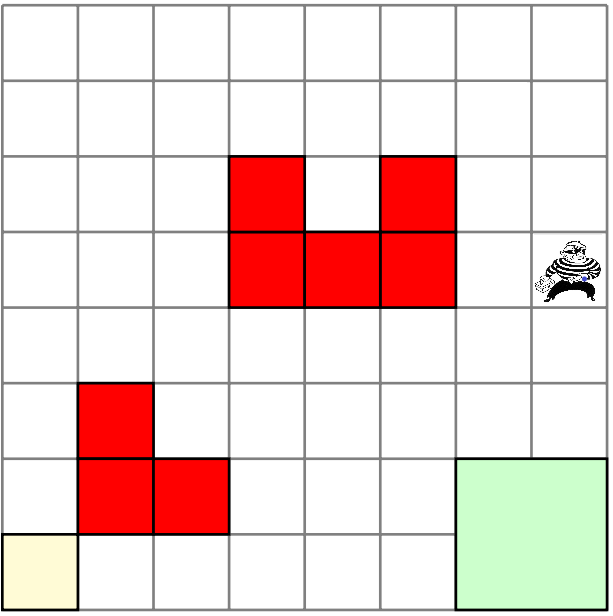}\hspace{-0.2cm}
		}
		\subfloat[$t=5$ \label{fig:case1t4}]{
			\includegraphics[scale=0.1275]{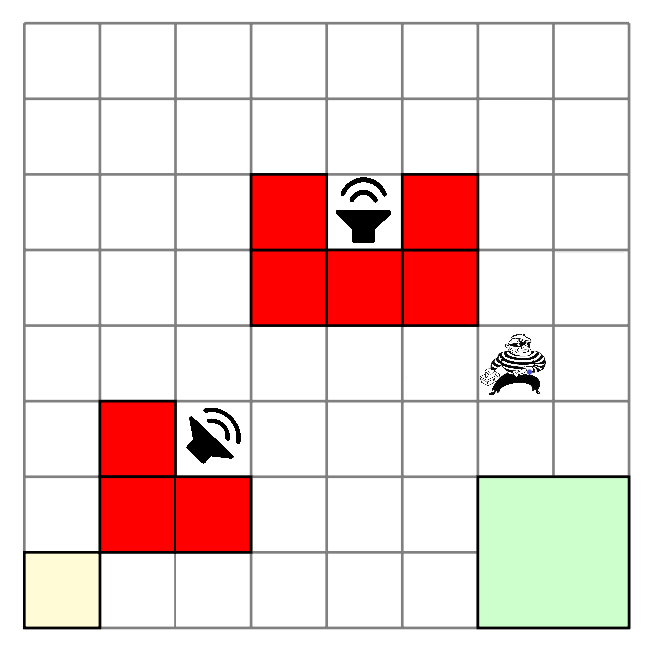}\hspace{-0.2cm}
		}
		\subfloat[$t=6$ \label{fig:case1t9}]{
			\includegraphics[scale=0.1275]{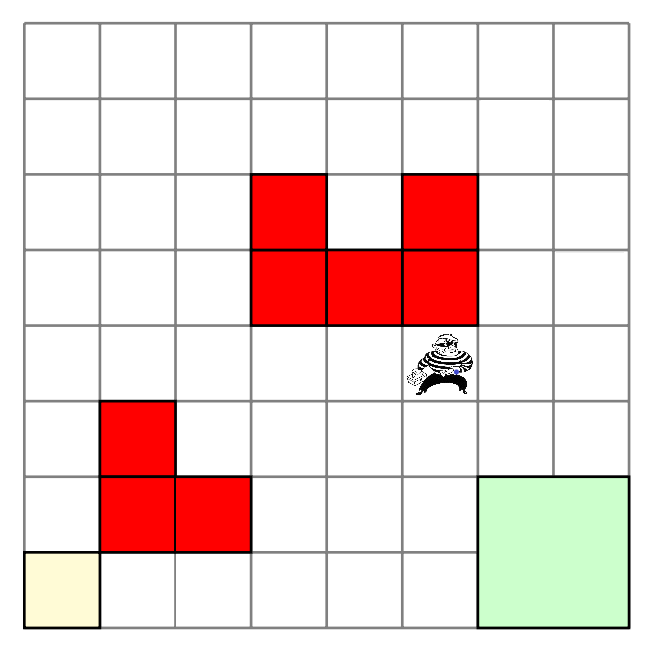}\hspace{-0.2cm}
		}
		\subfloat[$t=0$ \label{fig:case3t2}]{
			\includegraphics[scale=0.13]{grid_class2.png}\hspace{-0.2cm}
		}
		\subfloat[$t=3$ \label{fig:case2t3}]{
			\includegraphics[scale=0.13]{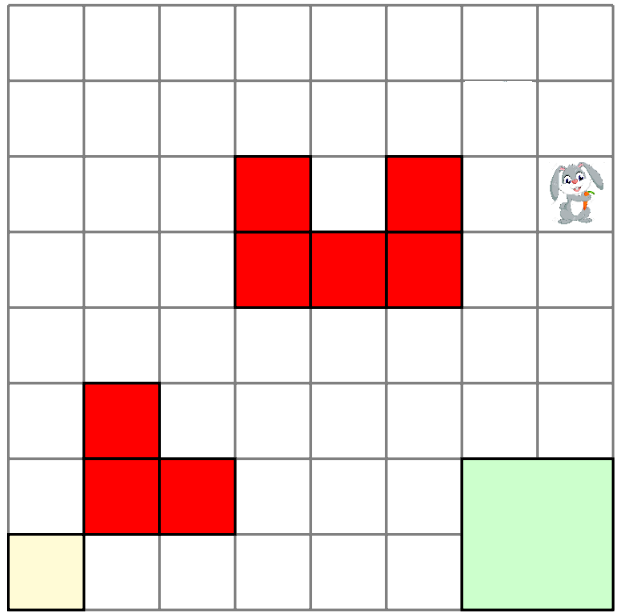}\hspace{-0.2cm}
		}
		\subfloat[$t=5$ \label{fig:case2t4}]{
			\includegraphics[scale=0.1275]{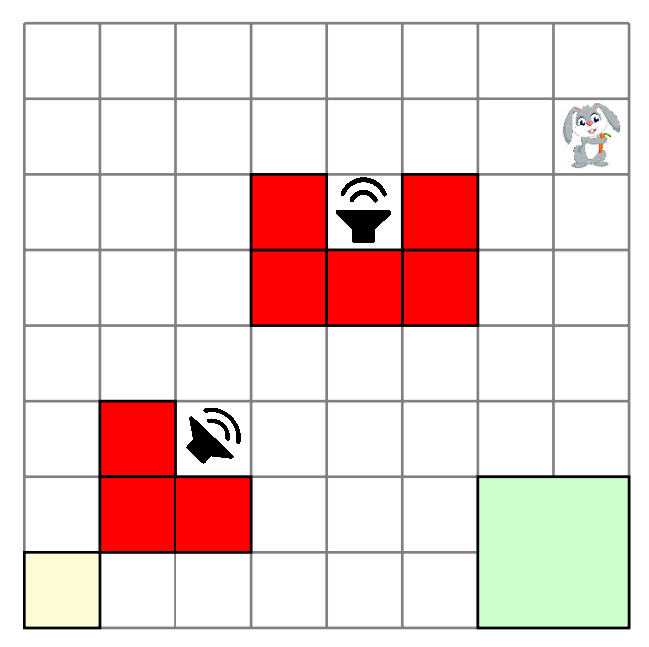}\hspace{-0.2cm}
		}
		\subfloat[$t=6$ \label{fig:case2t9}]{
			\includegraphics[scale=0.1275]{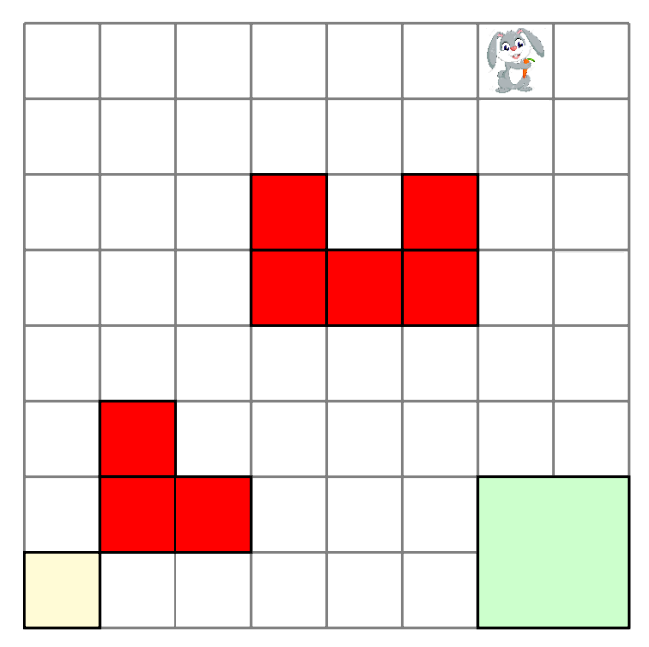}
		}
	\end{minipage}
	
	\begin{minipage}{5.0cm}
	%	\centering
		\hspace{-0.23cm}\subfloat[$t=0$  \label{fig:case1t5}]{
			\includegraphics[scale=0.095]{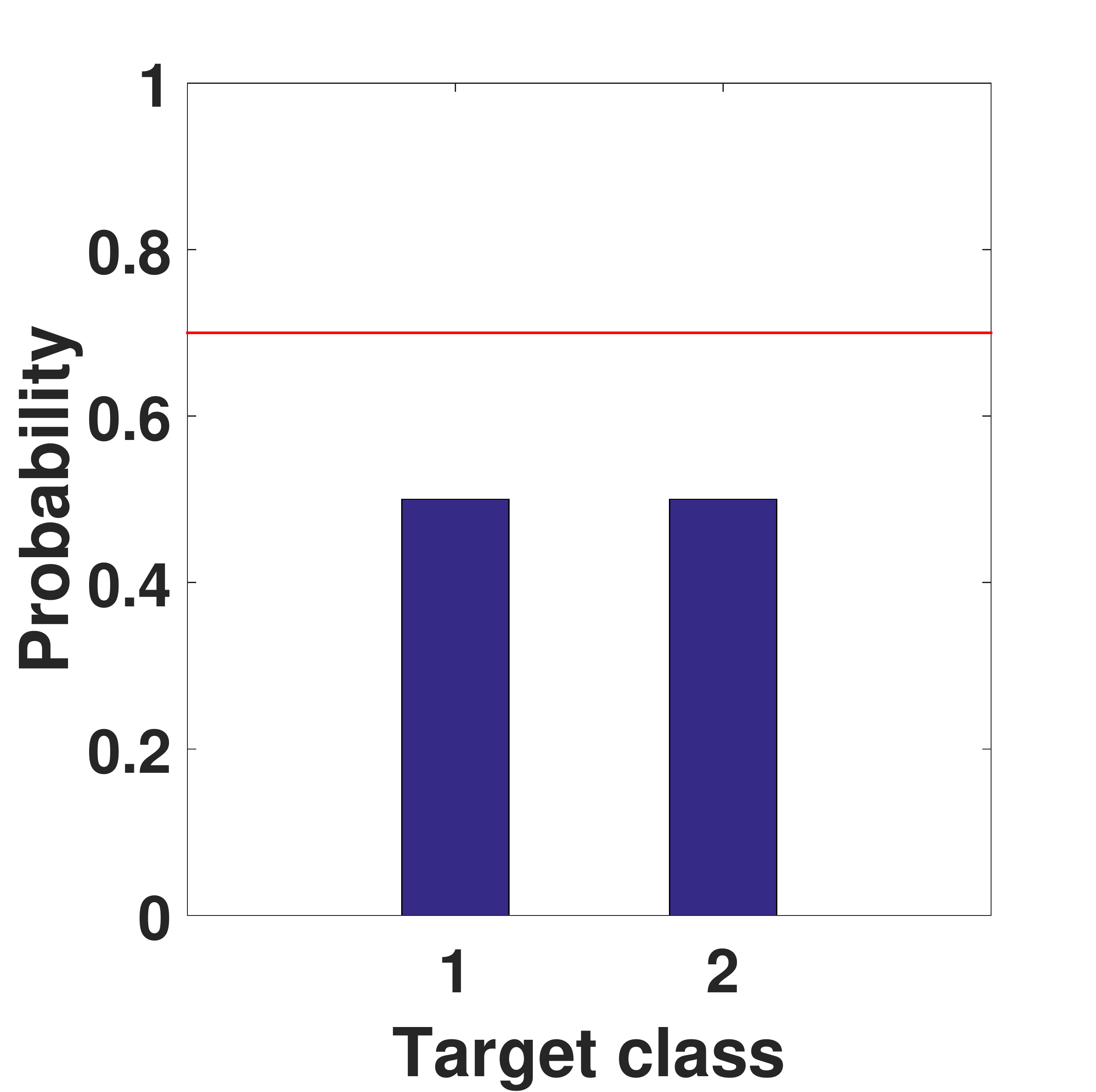}\hspace{-0.45cm}
		}
		\subfloat[$t=3$ \label{fig:case1t6}]{
			\includegraphics[scale=0.095]{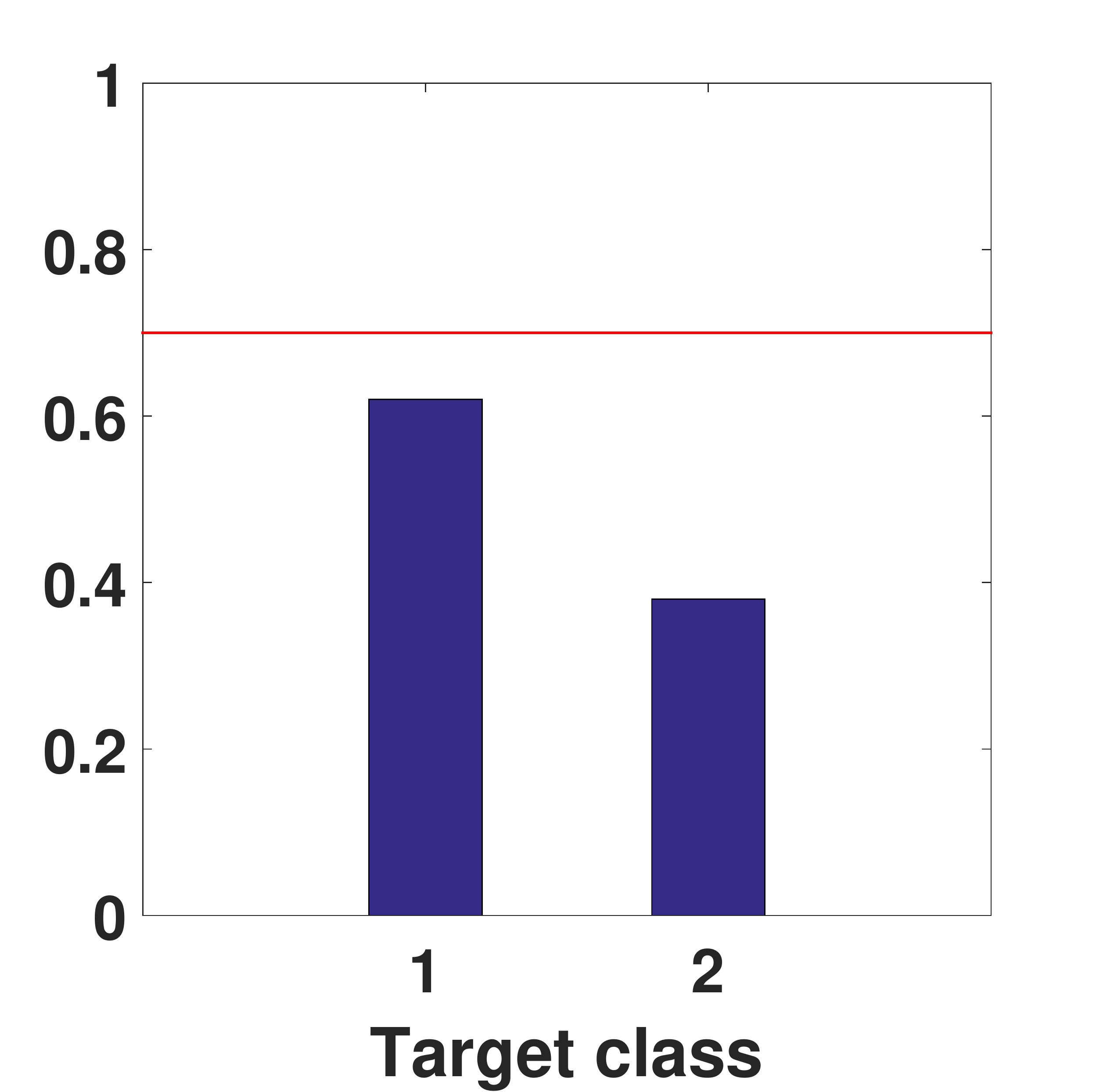}\hspace{-0.45cm}
		}
		\subfloat[$t=5$ \label{fig:case1t7}]{
			\includegraphics[scale=0.095]{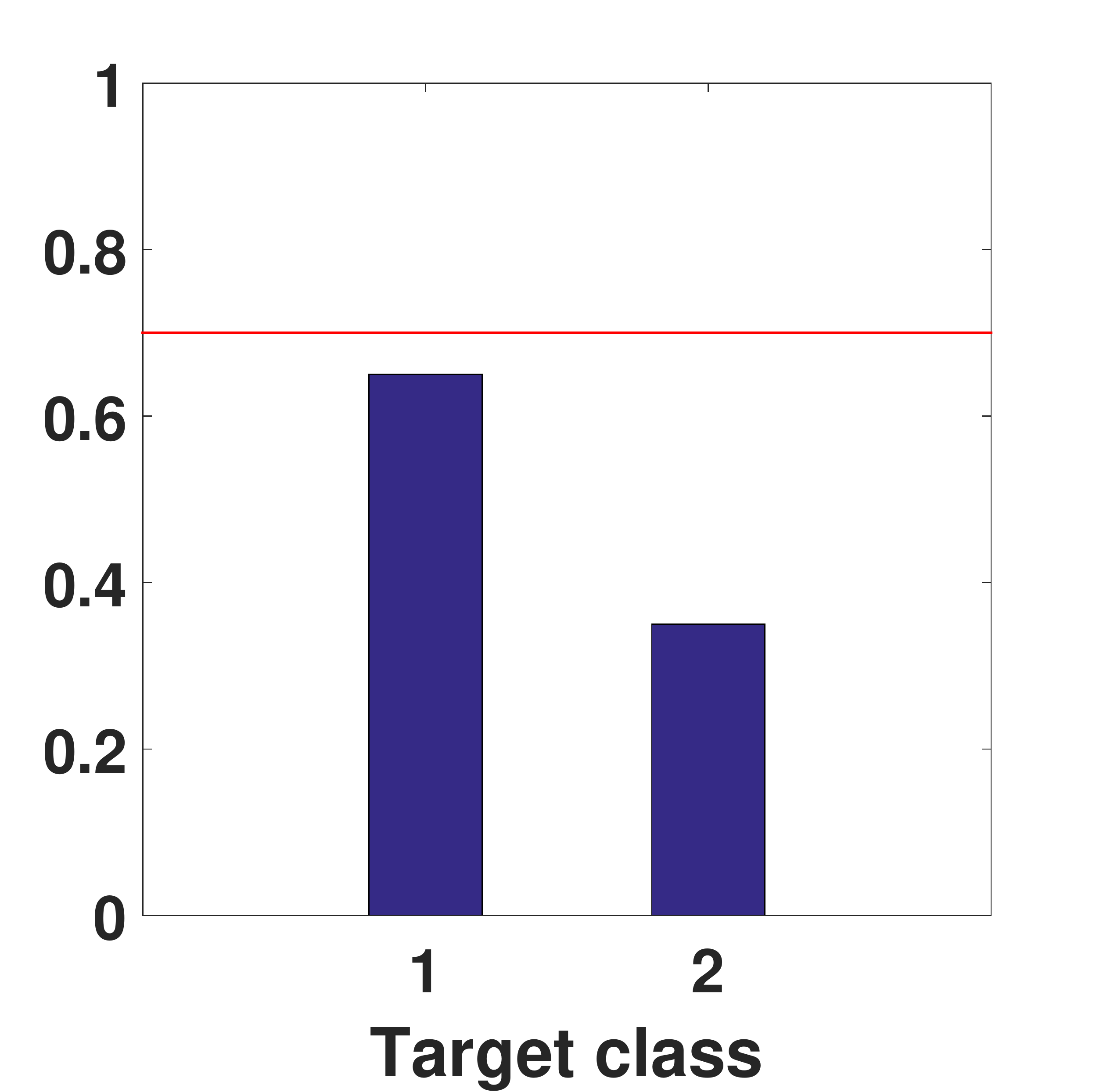}\hspace{-0.45cm}
		}
		\subfloat[$t=6$ \label{fig:case1t8}]{
			\includegraphics[scale=0.095]{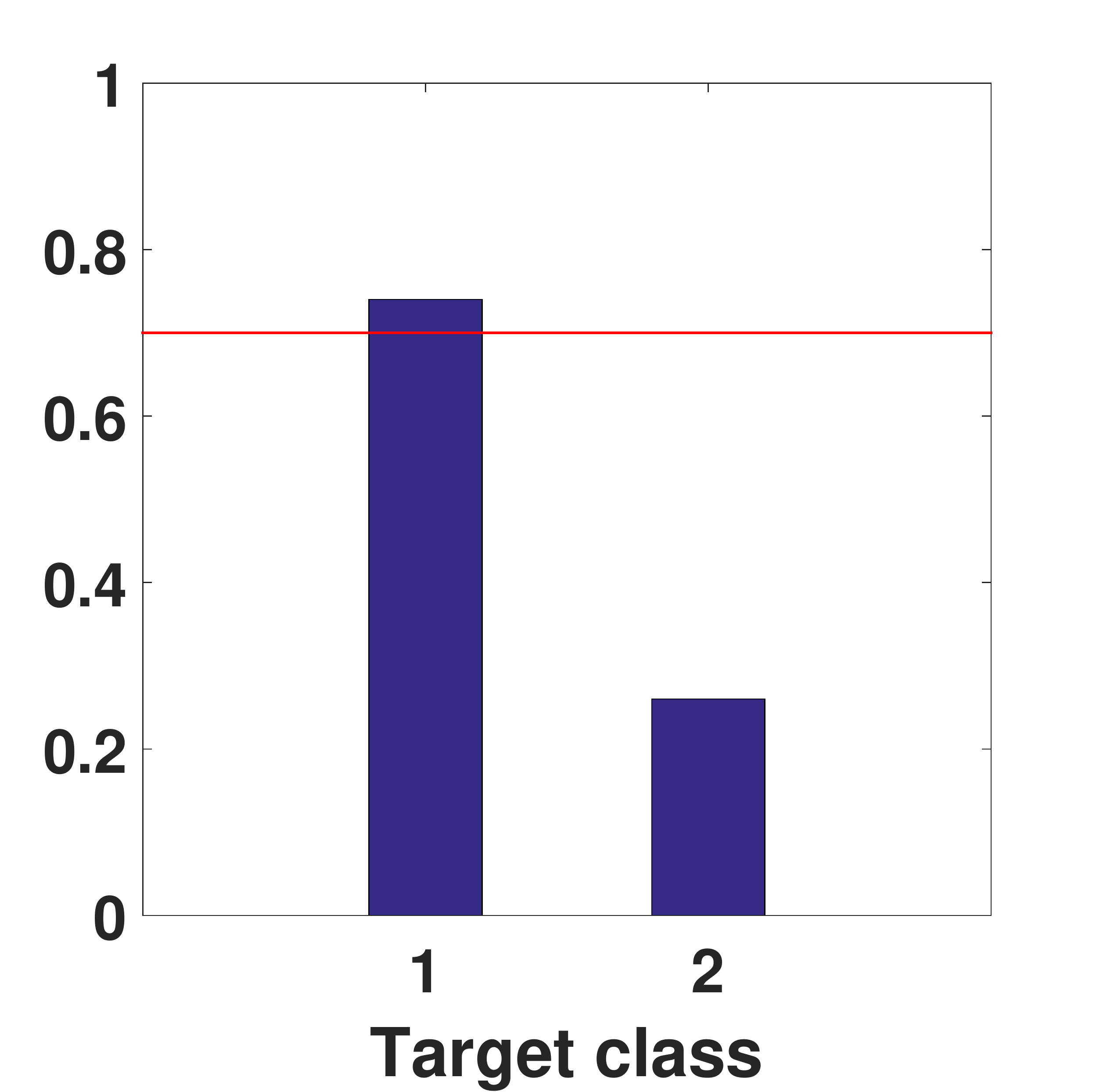}\hspace{-0.45cm}
		}
	\subfloat[$t=0$  \label{fig:case2t5}]{
			\includegraphics[scale=0.095]{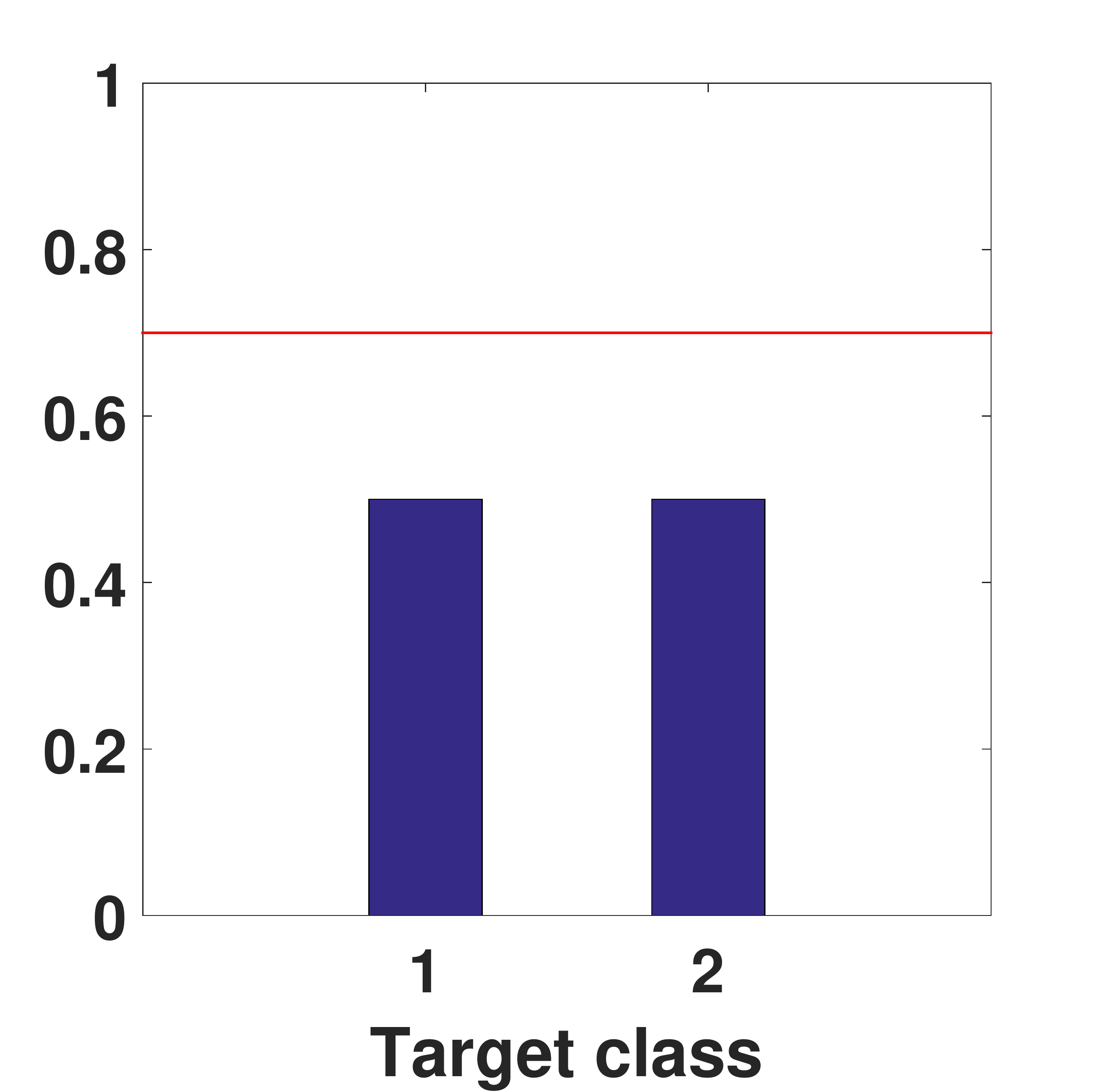}\hspace{-0.45cm}
		}
		\subfloat[$t=3$ \label{fig:case2t6}]{
			\includegraphics[scale=0.095]{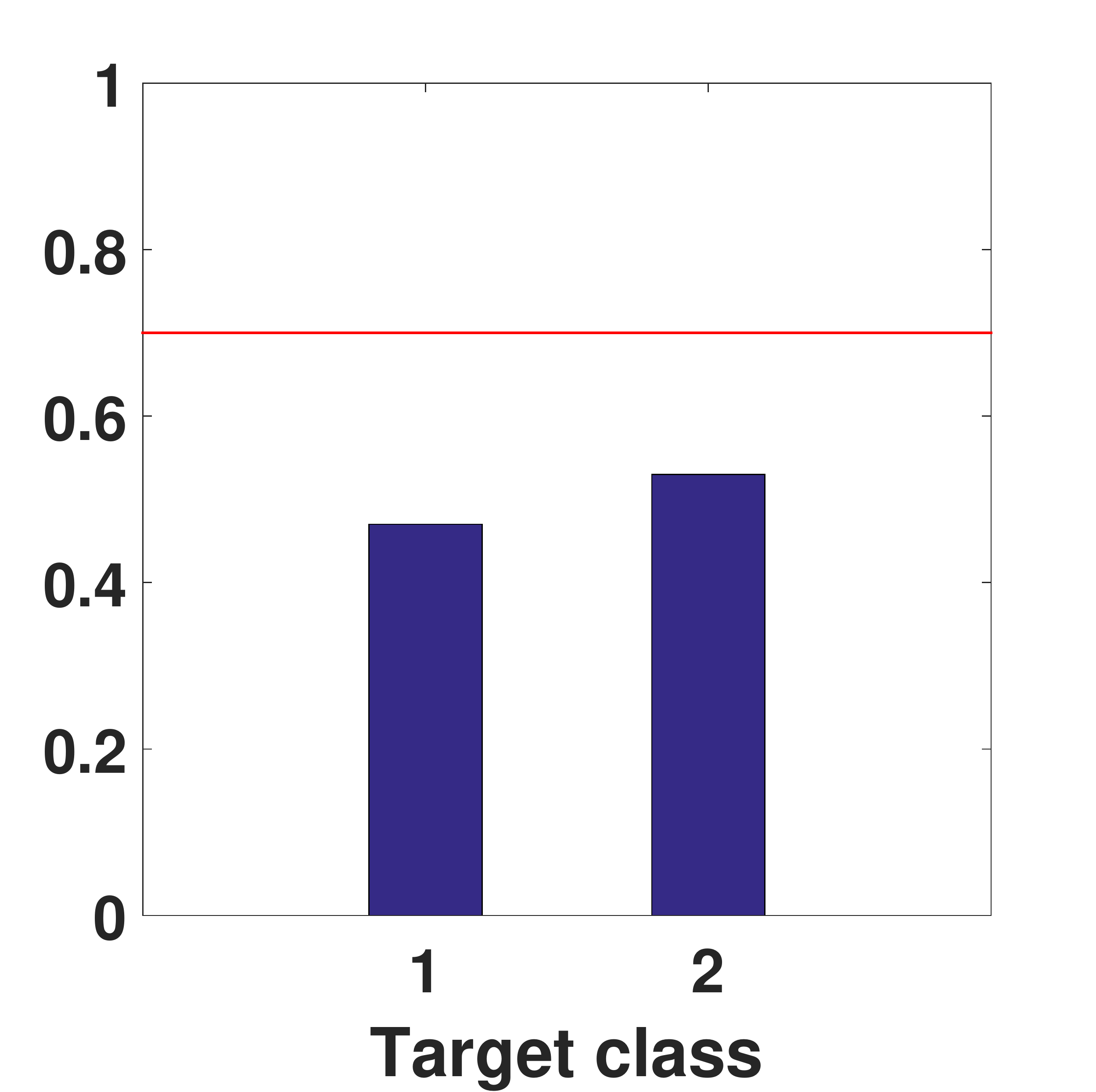}\hspace{-0.45cm}
		}
		\subfloat[$t=5$ \label{fig:case2t7}]{
			\includegraphics[scale=0.095]{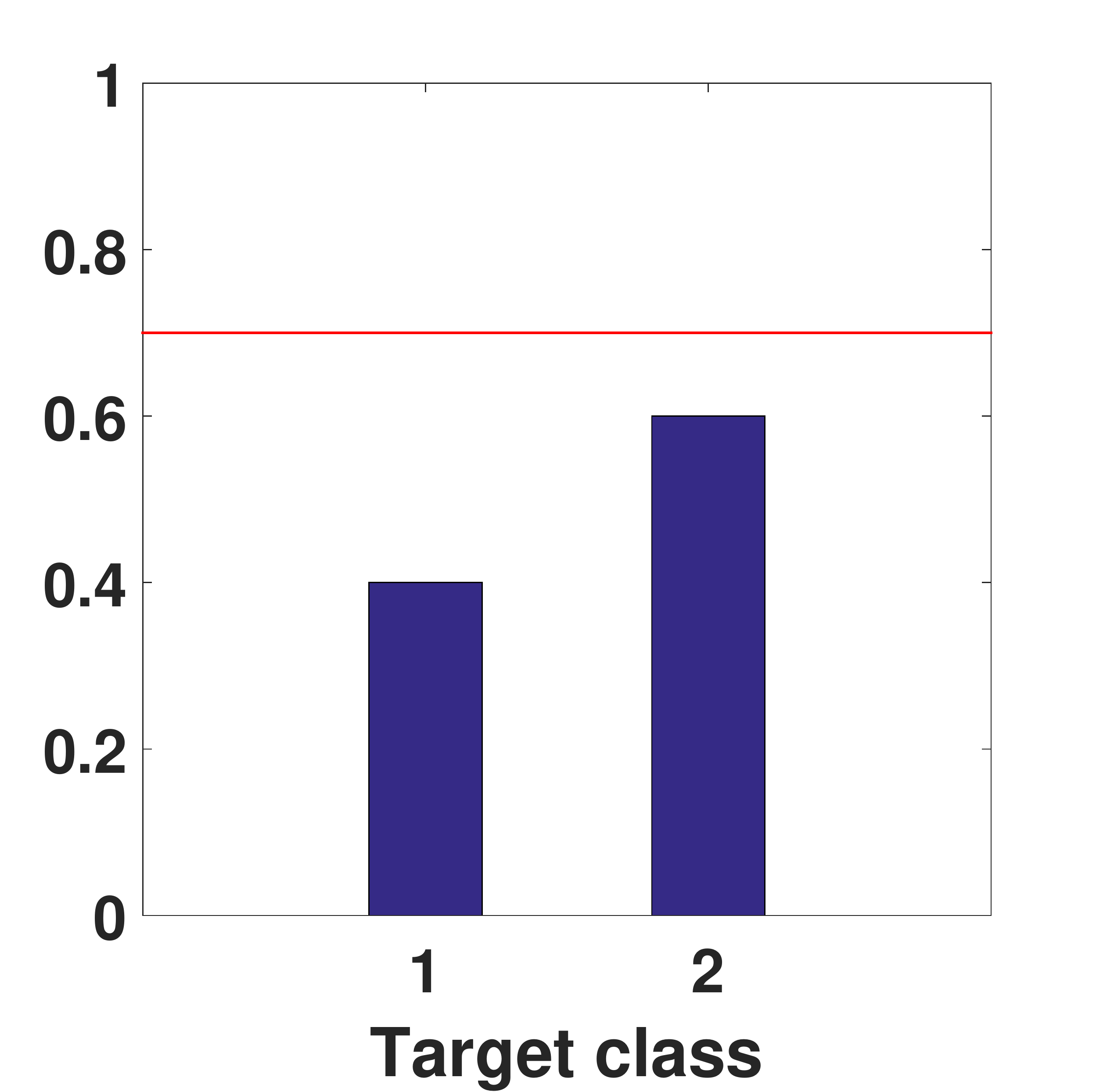}\hspace{-0.45cm}
		}
		\subfloat[$t=6$ \label{fig:case2t8}]{
			\includegraphics[scale=0.095]{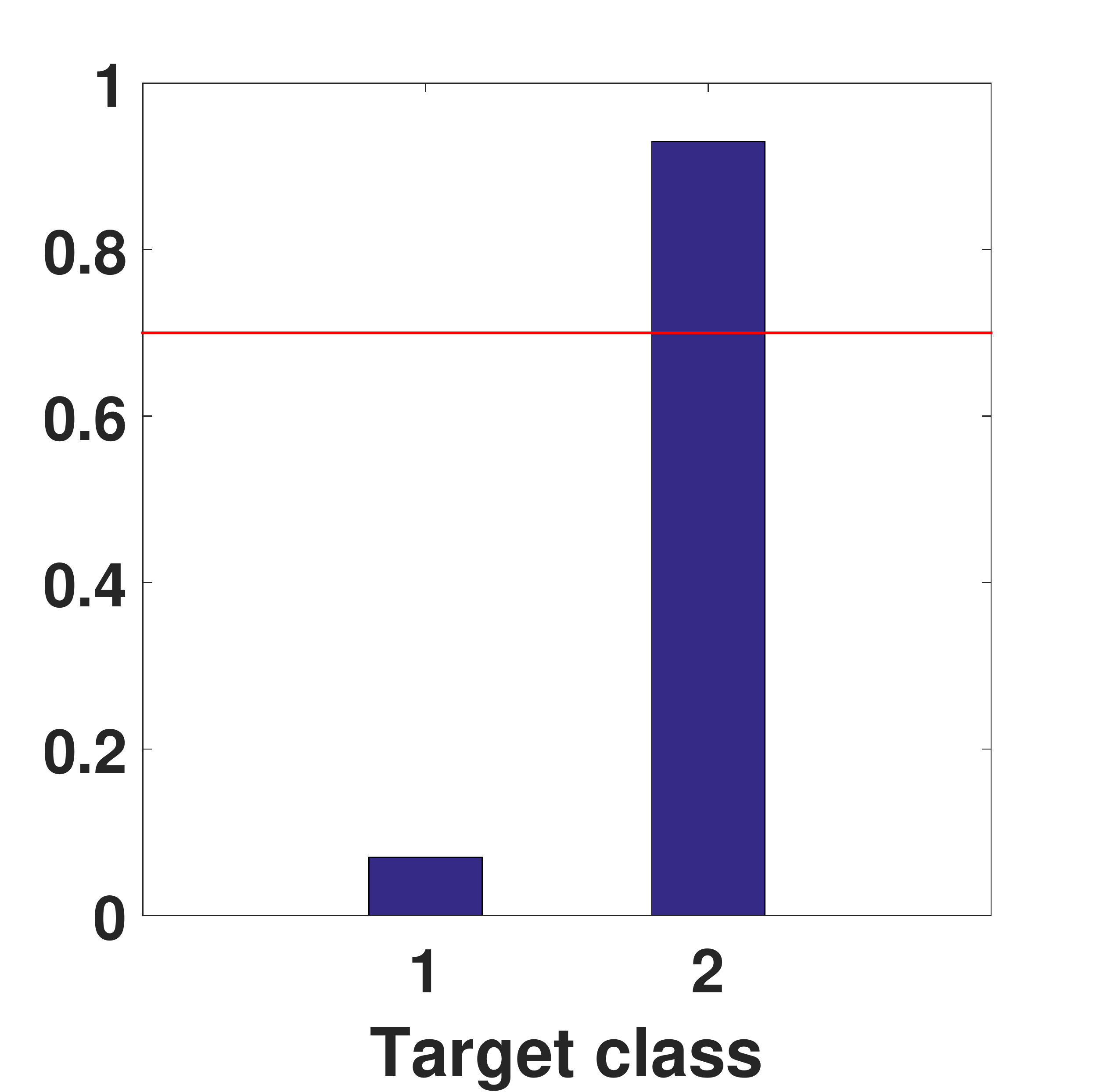}
		}
	\end{minipage}

	\caption{Simulation of the classification problem in a gridworld environment with both target classes. Figures \ref{fig:case1t2}- \ref{fig:case1t9} and \ref{fig:case1t5}-\ref{fig:case1t8} correspond to a simulation with a hostile target. Figures \ref{fig:case3t2}- \ref{fig:case2t9} and \ref{fig:case2t5}-\ref{fig:case2t8} correspond to a simulation with a safe target.}
	\label{fig:gridresults}
	\vspace{-3mm}
\end{figure*}

The results of the experiment can be seen in Figure~\ref{fig:gridresults}, where exact solution is used to compute the optimal classification strategy. Suppose the target is a hostile intruder, Figures~\ref{fig:case1t2}- \ref{fig:case1t9} illustrate a run of the human movement, where action $a_1$ is executed at $t=0,3,6$ and  action $a_2$ is executed at $t=5$. Figure~\ref{fig:case1t5}-\ref{fig:case1t8} depicts to the corresponding beliefs. 

In the scenario shown in Figure \ref{fig:case1t4}, the target is near $S_{safe}$ and the corresponding belief is  as seen in Figure~\ref{fig:case1t7} which favors class $1$. The optimal action at the time instance $t=5$ is to sound the alarm. Then at $t=6$, it is observed that the target moves towards the yellow cell and the belief of class $1$ exceeds the threshold. This is where the classification terminates and a human operator will be alerted. Figures~\ref{fig:case3t2}- \ref{fig:case2t9} and \ref{fig:case2t5}-\ref{fig:case2t8} shows the classification with a safe target (an animal) where similar behavior can be observed. After the alarm is used in $t=5$ in Figure \ref{fig:case2t7}, the rabbit runs to the top right corner as seen in Figure \ref{fig:case2t8}, a very unlikely move for a hostile intruder. As a result, the belief for class $2$ exceeds the threshold and the classification process terminates. In both simulations, the final costs are $7$.   %In some cases, it is possible to classify with only passive observation. This usually occurs when $\gamma$ is small which means our confidence in the models is high. However, as $\gamma$ is increased, more observations are needed and this may not be possible if we need to classify the target before it reaches the green region. 

\section{Conclusion}\label{sec:Conclusion}
In this paper, we studied a cost-bounded active classification of dynamical systems belonging to a  finite set of MDPs. We utilized the POMDP modeling framework and the objective was to actively select actions based on the current belief, accumulated cost, and time step, such that the probability to reach a classification decision within a cost bound can be maximized. To solve the problem, we proposed two approaches. The first one was an exact solver to obtain the unfolded belief MDP model considering the cost-bound, and then compute the optimal strategy. To mitigate the computation burden, the second approach adaptively samples the actions to estimate the maximum probability. One bottleneck of the propose approaches is the computational complexity. In the future,  we will explore point-based methods \cite{Kurniawati-RSS08} that are popular in POMDP literature to surmount this difficulty.   

%An example of medical diagnosis and treatment was used to illustrate both approaches. We also show the applicability of the approach to a security problem where potentially hostile targets need to be classified before they reach secure or sensitive areas.
%\section{Future work}
%In our current setup, we only consider a single cost function, where in many practical scenarios, there could be multiple cost functions. For example, there could be cost of time and money, respectively. It would be interesting to study how such setting affects our problem solution.

%Furthermore, our current solution is an exact method which completely unfolds the POMDP up to a predefined horizon $D$. The price to pay for this exact solution is the exponential complexity with respect to $D$ to get the unfolded MDP. A possible way to trade off exactness with scalability is to consider point sampling method, which is standard to solve a POMDP in large scale. The difference is that, we may require an bound for the optimality loss, which existing results rarely analyze. 
%\section{Possible Extensions}
%\begin{itemize}
 %   \item Currently only considered the classification problem with single object, where extending to multi objects will be necessary and nontrivial.
  %  \item It is possible to model the classification problem that we are interested in as a two-player partially observable game, where the human to be classified may also wisely avoid to be detected.
%\end{itemize}
\bibliographystyle{IEEEtran}
\bibliography{ref}
\end{document}